\newcommand{\orcid}[1]{\href{https://orcid.org/#1}{\hspace{5pt}\includegraphics[width=10pt]{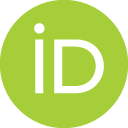}}}
\renewcommand{\thefootnote}{\fnsymbol{footnote}}
\definecolor{orcidlogocol}{HTML}{A6CE39}
\newtheorem{theorem}{Theorem}
\newtheorem{corollary}{Corollary}
\newcommand{\refthm}[1]{\hyperref[#1]{Theorem}~\ref{#1}}
\newcommand{\reffig}[2]{\hyperref[#1]{Figure}~\ref{#1}\hyperref[#1]{#2}}
\renewcommand{\refeq}[1]{\hyperref[#1]{Equation}~\ref{#1}}
\newcommand{\beginsupplement}{%
        \setcounter{table}{0}
        \renewcommand{\thetable}{S\arabic{table}}%
        \setcounter{figure}{0}
        \renewcommand{\thefigure}{S\arabic{figure}}%
     }
\newcommand{\norm}[1]{\left\lVert #1 \right\rVert}
\DeclareMathOperator*{\argmax}{arg\,max}
\DeclareMathOperator*{\argmin}{arg\,min}
\definecolor{darkblue}{RGB}{8, 60, 125}
\definecolor{darkgreen}{RGB}{46, 151, 78}
\definecolor{darkpurple}{RGB}{97, 64, 155}
\definecolor{darkred}{RGB}{139, 0, 0}
\definecolor{navyblue}{HTML}{000DB2}
\newcommand{\kl}{\mathbb{D}_\text{KL}}
\date{\today} \title{Automated construction of cognitive maps \\  with visual predictive coding}
\author{%
  \href{mailto:jgornet@caltech.edu}{{James
    A. Gornet}}\textsuperscript{1,2}{\footnotemark}\orcid{0000-0002-5431-7340} \\
  {\small\href{mailto:jgornet@caltech.edu}{jgornet@caltech.edu}} \And
  \href{mailto:mthomson@caltech.edu}{{Matt Thomson}}\textsuperscript{1,2} \\
  {\small\href{mailto:mthomson@caltech.edu}{mthomson@caltech.edu}} }
\address{ \textsuperscript{1}California Institute of Technology,
  Division of Biology and Biological Engineering, Pasadena,
  CA, USA \\
  \textsuperscript{2}California Institute of Technology, Computation
  and Neural Systems, Pasadena, CA, USA }
\begin{document}
\maketitle

\begin{abstract}
  Humans construct internal cognitive maps of their environment directly from sensory inputs without access to a system of explicit coordinates or distance measurements. While machine learning algorithms like SLAM utilize specialized inference procedures to identify visual features and construct spatial maps from visual and odometry data, the general nature of cognitive maps in the brain suggests a unified mapping algorithmic strategy that can generalize to auditory, tactile, and linguistic inputs. Here, we demonstrate that predictive coding  provides a natural and versatile neural network algorithm for constructing spatial maps using sensory data. We introduce a framework in which an agent navigates a virtual environment while engaging in visual predictive coding using a self-attention-equipped convolutional neural network. While learning a next image prediction task, the agent automatically constructs an internal representation of the environment that quantitatively reflects spatial distances. The internal map enables the agent to pinpoint its location relative to landmarks using only visual information.The predictive coding network generates a vectorized encoding of the environment that supports vector navigation where individual latent space units delineate localized, overlapping neighborhoods in the environment. Broadly, our work introduces predictive coding as a unified algorithmic framework for constructing cognitive maps that can naturally extend to the mapping of auditory, sensorimotor, and linguistic inputs.

\end{abstract}

\begin{multicols}{2}
  \begin{topfigure}[t!]
  \centering
  \includegraphics[width=\textwidth]{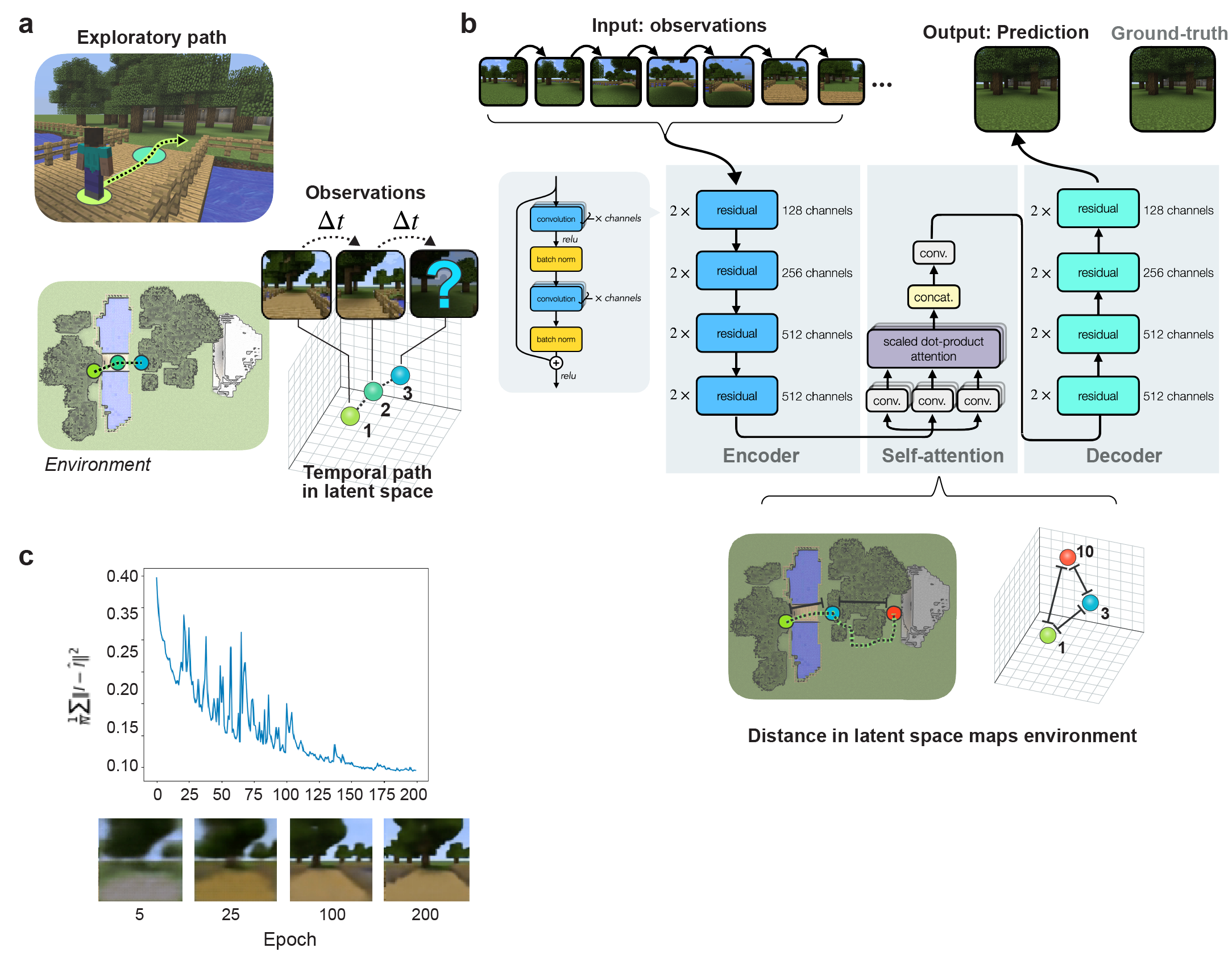}
  \caption{\textbf{A predictive coding neural network explores a virtual environment.} In
    predictive coding, a model predicts observations and updates its
    parameters using the prediction error. \textbf{a,} an agent's
traverses its environment by taking the most direct path to random positions. \textbf{b,} a self-attention-based
    encoder-decoder neural network architecture learns to perform predictive coding. A ResNet-18
    convolutional neural network acts as an encoder; self-attention is performed with 8 heads, and a
    corresponding ResNet-18 convolutional neural network performing decoding to the predicted
    image.\textbf{c,} the neural network learns to perform predictive coding effectively---with a
    mean-squared error of 0.094 between the actual and predicted images.
  }\label{fig:1}
\end{topfigure}

Space and time are fundamental physical structures in the natural world, and all organisms have evolved strategies for navigating space to forage, mate, and escape predation. \autocite{epsteinCognitiveMapHumans2017,wang2022localization,sivak2014environmental}. 
In humans and other mammals, the concept of a
spatial or cognitive map has been postulated to underlie spatial reasoning
tasks\autocite{andersonCognitivePsychologyIts2020,rescorla2009cognitive,whittington2022build}. A spatial map is an
internal, neural representation of an animal's environment that marks the location of landmarks, food, water, shelter, and then can be queried for navigation and
planning. The neural algorithms underlying spatial mapping are thought to  generalize to other sensory modes to provide cognitive representations of auditory and somatosensory data \autocite{aronovMappingNonspatialDimension2017b} as well as to construct internal maps of more abstract information including  concepts
\autocite{niehGeometryAbstractLearned2021,whittington2020tolman}, tasks \autocite{wilson2014orbitofrontal}, semantic information
\autocite{constantinescuOrganizingConceptualKnowledge2016a,garvertMapAbstractRelational2017,huthNaturalSpeechReveals2016b},
and memories \autocite{corkinLastingConsequencesBilateral1984}. Empirical evidence suggest that the brain uses common cognitive mapping strategies for spatial and non-spatial sensory information so that common mapping algorithms might exist that can map and navigate over not only visual but also semantic information and logical rules inferred from experience\autocite{behrens2018cognitive,aronovMappingNonspatialDimension2017b,niehGeometryAbstractLearned2021}. In such a paradigm reasoning itself could be implemented as a form of navigation within a cognitive map of concepts, facts, and ideas.

Since the notion of a spatial or cognitive map emerged, the question of how environments are represented within the brain and how the maps can be learned from experience has been a central question in neuroscience \autocite{okeefePlaceUnitsHippocampus1976}. Place cells in the hippocampus are neurons that are active when an animal transits through a specific location in an environment \autocite{okeefePlaceUnitsHippocampus1976}. Grid cells in the entorhinal cortex fire in regular spatial intervals and likely track an organism’s displacement in the environment \autocite{haftingMicrostructureSpatialMap2005a,amaralNeuronsNumbersHippocampal1990}. Yet with the identification of a substrate for the representation of space, the question of how a spatial map can be learned from sensory data has remained, and the neural algorithms that enable the construction of spatial and other cognitive maps remain poorly understood.

Empirical work in machine learning has demonstrated that deep neural networks can solve spatial navigation tasks as well as perform path prediction and grid cell formation \autocite{cuevaEmergenceGridlikeRepresentations2018,baninoVectorbasedNavigationUsing2018}. Cueva
\& Wei\autocite{cuevaEmergenceGridlikeRepresentations2018} and Banino \textit{et
  al.}\autocite{baninoVectorbasedNavigationUsing2018} demonstrate that neural networks can learn to
perform path prediction and that networks generate firing patterns that resemble the firing patterns
of grid cells in the entorhinal cortex. { 
Crane \textit{et al.}\autocite{craneHeatMethodDistance2017b}, Zhang \textit{et
al.}\autocite{Zhang2021.09.24.461751}, and Banino \textit{et
al.}\autocite{baninoVectorbasedNavigationUsing2018} demonstrate navigation
algorithms that require the environment's map or using firing patterns that resemble
place cells in the hippocampus.}
These studies allow an agent to access
environmental coordinates explicitly \autocite{cuevaEmergenceGridlikeRepresentations2018} or initialize a model
with place cells that represent specific locations in an
arena\autocite{baninoVectorbasedNavigationUsing2018}. In machine learning and autonomous navigation, a variety of algorithms have been developed to perform mapping tasks including SLAM and monocular SLAM algorithms
\autocite{thrunGraphSLAMAlgorithm2006,murartalVisualInertialMonocularSLAM2017,mourikisMultiStateConstraintKalman2007,lynenGetOutMy2015}
as well as neural network implementations
\autocite{guptaCognitiveMappingPlanning2019a,mirowskiLearningNavigateCities2018,duanRLFastReinforcement2016}. Yet,
SLAM algorithms contain many specific inference strategies, like visual feature
and object detection, that are specifically engineered for map building,
wayfinding, and pose estimation based on visual information.
{ Whereas extensive research in computer vision and machine learning use video
frames, these studies do not extract representations of the environment's map\autocite{higginsDARLAImprovingZeroShot2017,seoReinforcementLearningActionFree2022}.}
A unified theoretical and mathematical framework for understanding the mapping of spaces based on sensory information remains incomplete.

Predictive coding has been proposed as a unifying theory of neural function
where the fundamental goal of a neural system is to predict future observations
given past
data\autocite{lee2003hierarchical,mumford1994pattern,raoPredictiveCodingVisual1999b}. When
an agent explores a physical environment, temporal correlations in sensory
observations reflect the structure of the physical environment. Landmarks nearby
one another in space will also be observed in temporal sequence. In this way,
predicting observations in a temporal series of sensory observations requires an
agent to internalize some implicit information about a spatial
domain. Historically, Poincare motivated the possibility of spatial mapping
through a predictive coding strategy where an agent assembles a global
representation of an environment by gluing together information gathered through
local
exploration~\autocite{poincareFoundationsScienceScience2015,okeefeHippocampusCognitiveMap1978}.
The exploratory paths together contain information that could, in principle,
enable the assembly of a spatial map for both flat and curved manifolds.
{Indeed, extended Kalman filters\autocite{mourikisMultiStateConstraintKalman2007,thrunProbabilisticRobotics2005a}
for SLAM perform a form of
predictive coding by directly mapping visual changes and movement to spatial
changes. However, extended Kalman filters as well as other SLAM approachs
require intricate strategies for landmark size calibration, image feature
extraction, and models of the camera's distortion whereas biological systems can
solve flexible mapping and navigation that engineered systems cannot.}
Yet, while the concept of predictive coding for spatial mapping is intuitively
attractive, a major challenge is the development of algorithms that can glue
together local, sensory information gathered by an agent into a global,
internally consistent environmental map. Connections between mapping and
predictive coding in the literature have primarily focused on situations where
an agent has explicit access to its spatial location as a state variable
\autocite{stachenfeld2017hippocampus,recanatesi2021predictive,fang2023neural}. The
problem of building spatial maps \textit{de novo} from sensory data remains
poorly understood.

Here, we demonstrate that a neural network trained on a sensory, predictive
coding task can construct an implicit spatial map of an environment by
assembling observations acquired along local exploratory paths into a global
representation of a physical space within the network’s latent space. We analyze
sensory predictive coding theoretically and demonstrate mathematically that
solutions to the predictive sensory inference problem have a mathematical
structure that can naturally be implemented by a neural network with a
`path-encoder,' an internal spatial map, and a `sensory
decoder{,' and trained using backpropagation}. In such a
paradigm, a network learns an internal map of its environment by inferring an
internal geometric representation that supports predictive sensory inference. We
implement sensory predictive coding within an agent that explores a virtual
environment while performing visual predictive coding using a convolutional
neural network with self-attention. Following network training during
exploration, we find that the encoder network embeds images collected by an
agent exploring an environment into an internal representation of space. Within
the embedding, the distances between images reflect their relative spatial
position, not object-level similarity between images. During exploratory
training, the network implicitly assembles information from local paths into a
global representation of space as it performs a next image inference
problem. Fundamentally, we connect predictive coding and mapping tasks,
demonstrating a computational and mathematical strategy for integrating
information from local measurements into a global self-consistent environmental
model.

  \begin{topfigure}[t!]  \centering \includegraphics[width=\textwidth]{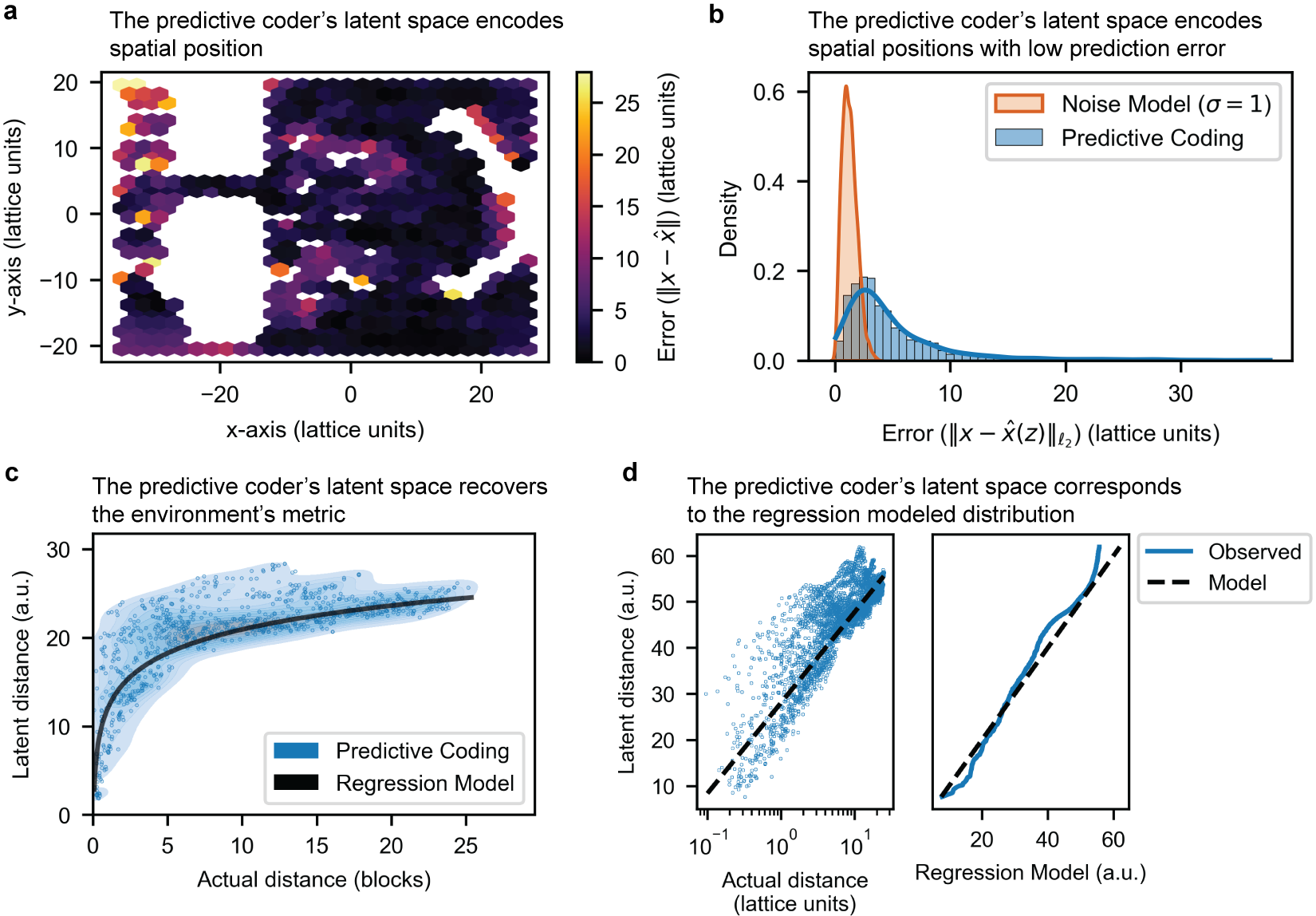}
 \caption{
\textbf{Predictive coding neural network constructs an implicit spatial map.}
{ \textbf{a-b,} The predictive coder's latent space encodes accurate spatial positions. A neural
network predicts the spatial location from the predictive coding’s latent space. \textbf{a,} a
heatmap of the prediction errors between the actual position and the predictive coder's predicted
positions show a low prediction error. \textbf{b,} The histogram of prediction errors of positions from the predictive coder's
latent space show a low prediction error. As a baseline (Noise model
($\sigma = 1$ lattice unit)), actual positions with a small noise
displacement gives an error model. \textbf{c,} predictive coding’s latent distances recover the environment’s spatial
metric. Sequential visual images are mapped to the neural network's latent space, and the latent
space distances ($\ell_2$) are plotted with physical distances onto a joint density
plot. An nonlinear regression model
\begin{minipage}{\linewidth}
\[
    \norm{z - z'} = \alpha \log \norm{x - x'} + \beta
\]
\end{minipage}
is shown as a baseline. \textbf{d,} a correlation plot and a quantile-quantile plot show the
overlap between the empirical and model distributions.}
}\label{fig:2} \end{topfigure}

\section*{Mathematical formulation of spatial mapping as sensory predictive coding}\label{sec:2}
In this paper, we aim to understand how a spatial map can be assembled by an
agent that is making sensory observations while exploring an environment. Papers
in the literature that study connections between predictive coding and mapping
have primarily focused on situations where an agent has access to its `state' or
location in the
environment\autocite{stachenfeld2017hippocampus,recanatesi2021predictive,fang2023neural}.
Here, we develop a theoretical model and neural network
implementation\footnote{ The neural network is a feedforward deep
neural network trained using backpropagation, or gradient descent,
rather than Helmholtz machines\autocite{dayanHelmholtzMachine1995,luttrellBayesianAnalysisSelfOrganizing1994}, which are commonly used in predictive
coding.} of
sensory predictive coding that illustrates why and how an internal spatial map
can emerge naturally as a solution to sensory inference problems. We, first,
formulate a theoretical model of visual predictive coding and demonstrate that
the predictive coding problem can be solved by an inference procedure that
constructs an implicit representation of an agent's environment to predict
future sensory observations.  The theoretical analysis also suggests that the
underlying inference problem that can be solved by an encoder-decoder neural
network that infers spatial position based upon observed image sequences.

We consider an agent exploring an environment, $\Omega \subset \mathbb{R}^2$,
 while acquiring visual information in the form of pixel valued image vectors
$I_x \in \mathbb{R}^{m \times n}$ given an $x \in \Omega$.  The agent's
environment $\Omega$ is a bounded subset of $\mathbb{R}^2$ that could contain
obstructions and holes. In general, at any given time, $t$, the agent's state
can be characterized by a position $x(t)$ and orientation $\theta(t)$ where
$x(t)$ and $\theta(t)$ are coordinates within a global coordinate system unknown
to the agent.  

The agent's environment comes equipped with a visual scene, and the agent makes
observations by acquiring image vectors $I_{x_k} \in \mathbb{R}^{m \times n}$ as
 it moves along a sequence of points $x_k$. At every position $x$ and
 orientation $\theta$, the agent acquires an image by effectively sampling from
an image the conditional probability distribution $P(I|x_k, \theta_k)$ which
encodes the probability of observing a specific image vector $I$ when the agent
is positioned at position $x_k$ and orientation $\theta_k$. The distribution
$P(I|x, \theta)$ has a deterministic
and stochastic component where the deterministic component is set by landmarks
in the environment while stochastic effects can emerge due to changes in
lighting, background, and scene dynamics. { Mathematically, we can view $P(I|x, \theta)$
as a function on a vector bundle with base space $\Omega$ and total space
$\Omega \times I$\autocite{tuDifferentialGeometryConnections2017a}.  The function assigns an observation probability to every
possible image vector for an agent positioned at a point $(x,
\theta)$. Intuitively, the agent's observations preserve the geometric
structure of the environment: the spatial structure influences
temporal correlations.}

In the predictive coding problem, the agent moves along a series of points
$(x_0, \theta_0),
(x_1, \theta_1), \dots, (x_k, \theta_k)$ while acquiring images $I_0, I_1, \dots I_k$.  The motion of
the agent in $\Omega$ is generated by a Markov process with transition
probabilities $P(x_{i+1}, \theta_{i+1}|x_i, \theta_i)$. Note that the agent has access to the image
observations $I_i$ but not the spatial coordinates $(x_i, \theta_i)$. Given the set $\{ I_0
\dots I_k\}$ the agent aims to predict $I_{k+1}$. Mathematically, the image
prediction problem can be solved theoretically through statistical inference by
(a) inferring the posterIor probability distribution
$P(I_{k+1}|I_0,I_1....I_{k})$ from observations. Then, (b) given a specific
sequence of observed images $\{ I_0 \dots I_k\}$, the agent can predict the next
image $I_{k+1}$ by finding the image $I_{k+1}$ that maximizes the posterior
probability distribution $P(I_{k+1}|I_0,I_1....I_{k})$.

The posterior probability  distribution $P(I_{k+1}|I_0,I_1,....I_{k})$ is by definition

\begin{align*}
  & P(I_{k+1}|I_0,I_1,....,I_{k}) = \frac{P(I_0,I_1,...,I_{k},I_{k+1})}{P(I_0,I_1,...,I_k) }. \\
  \end{align*}

If we consider $P(I_0,I_1 \dots I_k,I_{k+1})$ to be a function of an
implicit set of spatial coordinates $(x_i, \theta_i)$ where the $(x_i,
\theta_i)$ provide an internal representation of the spatial environment. Then, we can express the posterior probability $P(I_{k+1}|I_0,I_1....I_{k})$ in terms of the implicit spatial representation 

\begin{widetext}
\begin{align}\label{eq:path-int}
  P(I_{k+1}|I_0,I_1,\ldots,I_{k})
  &=   \int_{\Omega} \mathbf{dx} \, \mathbf{d\theta} \ P(x_0, \theta_0, x_1, \theta_1,
    \dots, x_k, \theta_k) \frac{P(I_0,I_1
    \cdots I_k|x_0, \theta_0, \dots, x_k, \theta_k)}{P(I_0,I_1, \dots,I_k) }
    P(x_{k+1}|x_k, \theta_k) P(I_{k+1}|x_{k+1}, \theta_{k+1}) \nonumber \\
 &=   \int_{\Omega} \ \mathbf{dx} \, \mathbf{d\theta} \
   \underbrace{P(x_0, \theta_0,x_1, \theta_1, ...,x_k, \theta_k|I_0,
   I_1 \dots, I_k)}_{\text{encoding} (1)} \underbrace{P(x_{k+1},
   \theta_{k+1}|x_k, \theta_k)}_{\text{spatial transition probability } (2)}
   \underbrace{P(I_{k+1}|x_{k+1}, \theta_{k+1})}_{\text{decoding } (3)}
\end{align}
\end{widetext}
where in \refeq{eq:path-int} the integration is over all possible paths
 $\{(x_0, \theta_0) \dots (x_k, \theta_k) \}$ in the domain $\Omega$, $\mathbf{dx} = dx_0 \dots
dx_k$, and $\mathbf{d\theta} = d\theta_0 \cdots d\theta_k$. \refeq{eq:path-int} can be interpreted
as a path integral over the domain $\Omega$.  The path integral assigns a probability to every
possible path in the domain and then computes the probability that the agent will observe a next
image $I_k$ given an inferred location $(x_{k+1}, \theta_{k+1})$. In detail, term (1) assigns a probability to every
discrete path $\{(x_0, \theta_0) \dots (x_k, \theta_k)\} \in \Omega$ as the conditional likelihood of the path given the
observed sequences of images $\{I_0 \dots I_k\}$ .  Term (2) computes the probability that an agent at
a terminal position $x_k$ moves to the position $(x_{k+1}, \theta_{k+1})$ given the Markov transition function
$P(x_{k+1}, \theta_{k+1}|x_k, \theta_k)$. Term (3) is the conditional probability that image $I_{k+1}$ is observed given that
the agent is at position $(x_{k+1}, \theta_{k+1})$.

Conceptually, the product of terms solves the next image prediction problem in three steps. First
(1), estimating the probability that an agent has traversed a particular sequence of points given
the observed images; second (2), estimating the next position of the agent $(x_{k+1}, \theta_{k+1})$ for each
potential path; and third (3), computing the probability of observing a next image $I_{k+1}$ given
the inferred
terminal location $x_{k+1}$ of the agent. Critically, an algorithm that implements the inference
procedure encoded in the equation would construct an internal but implicit representation of the
environment as a coordinate system $\mathbf{x}, \bm{\theta}$ that is learned by the agent and used during the
next image inference procedure.  The coordinate system provides an internal, inferred representation
of the agent's environment that is used to estimate future
image observation probabilities. Thus, our theoretical framework demonstrates how an agent might construct an implicit representation of its spatial environment by solving the predictive coding problem.

The three step inference procedure represented in the equation for $P(I_{k+1}|I_0 \dots I_k)$ can be
directly implemented in a neural network architecture, as demonstrated
in the \hyperlink{sec:app-nn}{Appendix}. The first term acts as an `encoder' network
that computes the probability that the agent has traversed a path
$\{(x_0, \theta_0) \dots (x_k, \theta_k)\}$
given an observed
image sequence
$I_0,\ldots, I_k$ that has been observed by the network (\reffig{fig:1}{(\textbf{b})}). The network
can, then, estimate the next position of the agent $(x_{k+1},
\theta_{k+1})$ given an inferred location $(x_k, \theta_k)$, and
apply a decoding network to compute $P(I_{k+1}|x_{k+1}, \theta_{k+1})$ while outputting the prediction $I_{k+1}$
using a decoder.  A network trained through visual experience must learn an internal coordinate
system and representation $\mathbf{x}, \bm{\theta}$ that not only offers an environmental representation but also
establishes a connection between observed images $I_j$ and inferred
locations $(x_j, \theta_j)$.

\subsection*{A neural network performs accurate predictive coding within a virtual environment}

Motivated by the implicit representation of space contained in the predictive coding inference problem, we developed a computational implementation of a predictive coding agent, and studied the representation of space learned by that agent as it explored a virtual environment. We first create an environment with the Malmo environment in Minecraft \autocite{Johnson2016TheMP}. The physical environment measures 40 × 65 lattice units and encapsulates three aspects of visual scenes: a cave provides a global visual landmark, a forest provides degeneracy between visual scenes, and a river with a bridge constrains how an agent traverses the environment (\reffig{fig:1}{(\textbf{a})}). An agent follows paths (\reffig{fig:supp-E}{(\textbf{b-c})}), determined by $A^*$ search, between randomly sampled positions and receives visual images along every path. 

To perform predictive coding, we construct an encoder-decoder convolutional neural network (CNN) with a
ResNet-18 architecture\autocite{heDeepResidualLearning2015} for the encoder and a corresponding
ResNet-18 architecture with transposed convolutions in the decoder (\reffig{fig:1}{(\textbf{b})}). The
encoder-decoder architecture uses the U-Net
architecture\autocite{ronnebergerUNetConvolutionalNetworks2015a} to pass the encoded latent units into
the decoder. Multi-headed attention\autocite{vaswaniAttentionAllYou2023} processes the sequence of
encoded latent units to encode the history of past visual observations. The multi-headed attention
has $h = 8$ heads. For the encoded latent units with dimension $D = C \times H \times W$, the
dimension $d$ of a single head is $d = C \times H \times W / h$.

The predictive coder approximates predictive coding by minimizing the mean-squared error between the
actual observation and its predicted observation. The predictive coder trains on $82,630$ samples
for $200$ epochs with gradient descent optimization with Nesterov
momentum\autocite{sutskeverImportanceInitializationMomentum}, a weight decay of $5 \times 10^{-6}$,
and a learning rate of $10^{-1}$ adjusted by OneCycle learning rate
scheduling\autocite{smithSuperConvergenceVeryFast2018}. The optimized predictive coder has a
mean-squared error between the predicted and actual images of $0.094$ and a good visual fidelity
(\reffig{fig:1}{(\textbf{c})}).

  \begin{figure*}
  \centering
  \includegraphics[width=\textwidth]{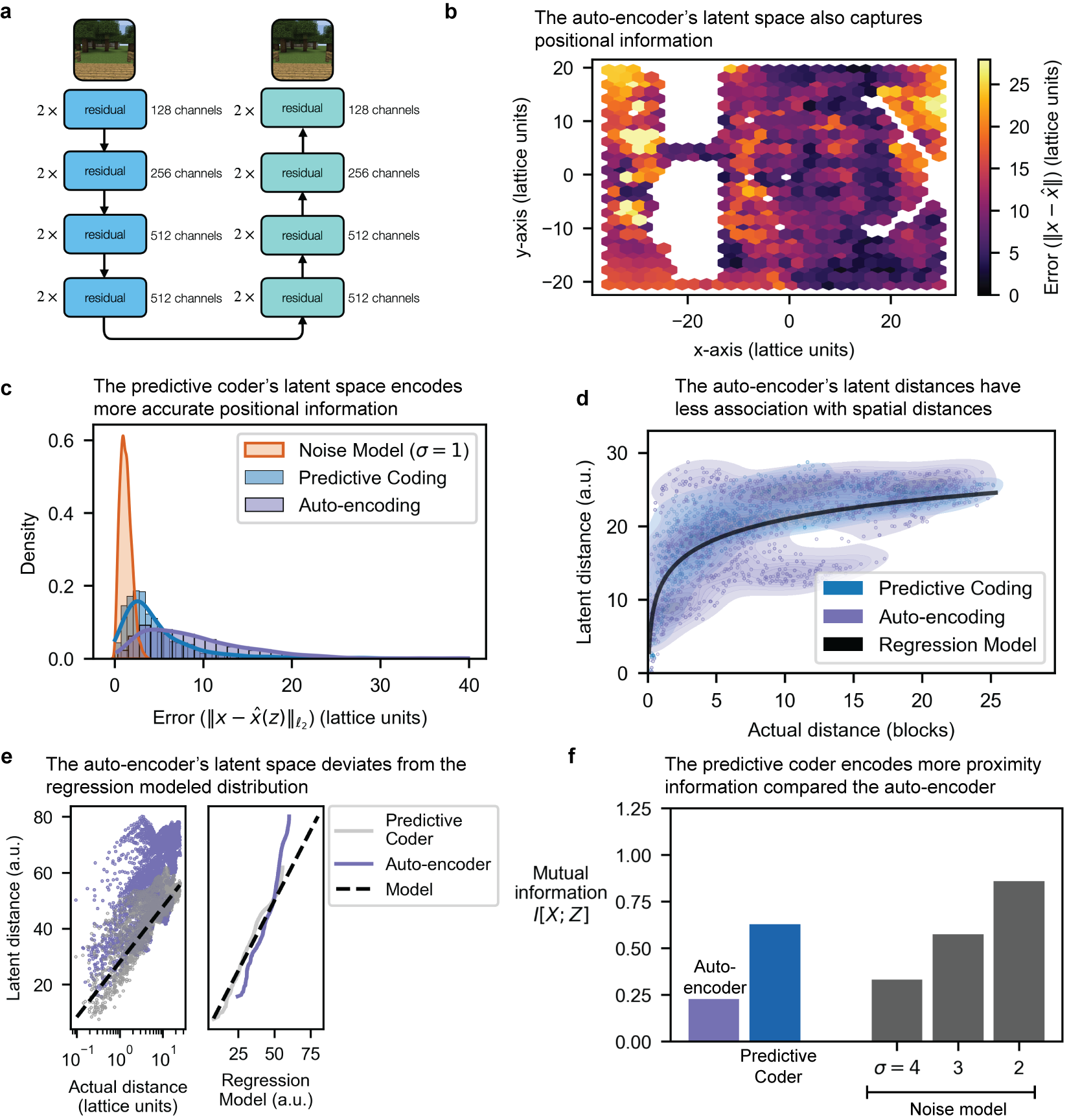}
  \caption{\textbf{Predictive coding network learns spatial proximity
not image similarity} { \textbf{a,} an autoencoding neural network compresses
    visual images into a low-dimensional latent vector and reconstructs the image from the latent
    space. Auto-encoder trains on visual images from the environment \textit{without any sequential
      order}. \textbf{b-c,} auto-encoding encodes lower resolution in positional
    information. \textbf{b,} a neural network predicts the spatial location from the auto-encoding’s
    latent space. A heatmap of the prediction errors between the
    actual position and the auto-encoder's predicted positions
    show a higher prediction error---compared to the predictive coder. \textbf{c,} auto-encoding captures
less positional information compared to
    predictive coding. The histogram shows the prediction errors of positions from the latent space
    of both the auto-encoder and the predictive coder. \textbf{d,} latent distances, however, show a
    weaker relationship with physical distances, as the joint histogram between physical and latent
    distances is less concentrated. \textbf{e,} a correlation plot and
a quantile-quantile plot show a lower correlation and a lower density overlap
 between the empirical and model distributions.
  \textbf{f,} predictive coding’s latent units communicate more
    fine-grained spatial distances whereas auto-encoding communicates broad spatial regions. Joint
    density plots show the association between latent distances and physical distances for both
    predictive coding and auto-encoding. Predictive coding’s latent distances increase with spatial
    distances, with a higher concentration compared to auto-encoding.}}\label{fig:3}
\end{figure*}

\section*{Predictive coding network constructs an implicit spatial map} \label{sec:map}

We show that the predictive coder creates an implicit spatial map by demonstrating it recovers the
environment's spatial position and distance. We encode the image sequences using the predictive
coder's encoder to analyze the encoded sequence as the predictive coder's latent units. To measure
the positional information in the predictive coder, we train a neural network to predict the agent's
position from the predictive coder's latent units (\reffig{fig:1}{(\textbf{a})}). The neural network's
prediction error
\[
  E(x, \hat{x}) = \norm{\hat{x} - x}_{\ell_2}
\]
indirectly measures the predictive coder's positional
information. { To provide comparative baselines, we construct a
position prediction model.  To lower bound the prediction error, we
construct a model that gives the agent's actual position with small additive Gaussian noise
\[
  \hat{x} = x + \epsilon, \epsilon \sim \mathcal{N}(0, \sigma).
\]
To compare the predictive coder to the baselines, we compare the
prediction error histograms (\reffig{fig:2}{(\textbf{b})}).
}

The predictive coder encodes the environment's spatial position to a low prediction error
(\reffig{fig:2}{.d}). The predictive coder has a mean error of $5.04$ lattice units and $> 80\%$ of
samples have an error $< 7.3$ lattice units. The additive Gaussian model with $\sigma = 4$ has a
mean error of $4.98$ lattice units and $>80\%$ of samples with an
error $< 7.12$ lattice units.

We show the predictive coder's latent space recovers the local distances between the environment's
physical positions. For every path that the agent traverses, we calculate the local pairwise
distances in physical space and in the predictive coder's latent space with a neighborhood of 100
time points. To determine whether latent space distances correspond to physical distances, we
calculate the joint density between latent space distances and physical distances
(\reffig{fig:2}{(\textbf{c})}). We model the latent distances by fitting the physical distances with
additive Gaussian noise to a logarithmic function
\[
  d(z, z') = \alpha \log (\norm{x - x' + \epsilon}) + \beta, \epsilon \sim \mathcal{N}(0, \sigma).
\]
The modeled distribution is concentrated with the predictive coder's
distribution (\reffig{fig:2}{(\textbf{d})}) with a { Pearson correlation coefficient
of 0.827 and a} Kullback-Leibler divergence $(\mathbb{D}_\text{KL}(p_\text{PC} \Vert
p_\text{model}))$ of $0.429$ bits.

  \begin{figure*}
  \centering
  \includegraphics[width=\textwidth]{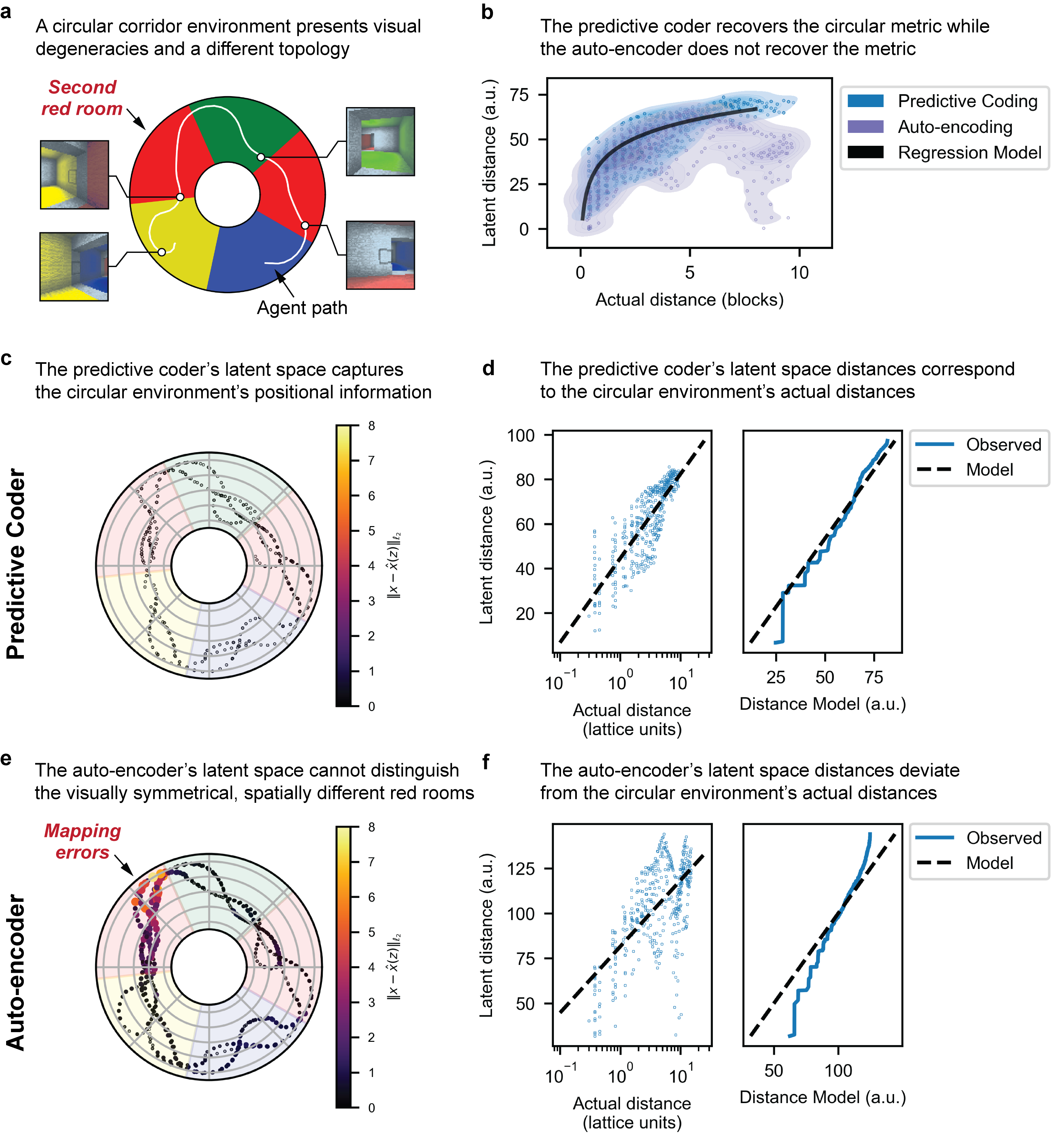}
  \caption{\textbf{Predictive coding network can learn a circular
topology and distinguishes visually
identical, spatially different locations} { \textbf{a,} an agent traverses a
circular environment with two visually identical red rooms, which
provides visually similar yet spatially different
locations. \textbf{b,} the predictive coder's latent distances show a
correspondence with the circular environment's metric while the
auto-encoder's latent distances show little correlation. \textbf{c,}
similar to Figures 2 and 3, a different neural network measures the
predictive coder's spatial information by predicting the agent's location from the
predictive coder's latent space. The predictive coder's latent space
demonstrates a low prediction error. \textbf{d,} similar to Figures 2 and 3, the nonlinear
regression measures the correspondence between the latent distances
$\norm{z - z'}$ and the actual distances $\norm{x - x'}$ with the model
\begin{minipage}{\linewidth}
\[
    \norm{z - z'} = \alpha \log \norm{x - x'} + \beta.
\]
\end{minipage}
\textbf{d,} the correlation plot (left) with
the nonlinear regression model show a strong correlation between the
predictive coder's latent distances and the environment's actual distances
($r = 0.827$). The quantile-quantile plot (right) between the
predictive coder's latent distances and the regression model show high
overlap ($\mathbb{D}_\text{KL}(p_\text{PC} \Vert p_\text{model}) =
0.250$). 
  \textbf{e,} without any past information, the auto-encoder cannot
distinguish the two different red rooms and produces a high prediction
error in these locations. \textbf{f,} the correlation plot (left) with
the nonlinear regression model show little correlation between the
auto-encoder's latent distances and the environment's actual distances
($r = 0.288$). The quantile-quantile plot (right) between the
auto-encoder's latent distances and the regression model show little
overlap ($\mathbb{D}_\text{KL}(p_\text{PC} \Vert p_\text{model}) =
3.806$).}  }\label{fig:3.1}
\end{figure*}

\section*{Predictive coding network learns spatial proximity not image similarity} \label{sec:auto}

In the \hyperref[sec:map]{previous section}, we show that a neural network that performs predictive
coding learns an internal representation of its physical environment within its latent space. Here, we demonstrate that the prediction task itself is essential for spatial mapping . Prediction forces a network to learn spatial proximity and not merely image similarity.  Many frameworks including principal
components analysis, IsoMap\autocite{tenenbaumGlobalGeometricFramework2000}, and autoencoder neural networks can collocate images by visual similarity.  While similar scenes might be proximate in space, similar scenes can also be spatially divergent. For example, the virtual environment we constructed has two different `forest' regions that are separated by a lake. Thus, in the two  forest environments might generate similar images but are actually each closer to the lake region than to one another (\reffig{fig:1}{(\textbf{a})}).

To demonstrate the central role for prediction in mapping, we compared the latent representation of images generated by the predictive coding network to a representation learned by an autoencoder. The
auto-encoder network has a similar architecture to the predictive encoder but encodes a \textit{single} image observation in a latent space, and decodes the same observations.  As the auto-encoder only operates on a single image---rather than a sequence, the
auto-encoder learns an embedding based on image proximity not underlying spatial relationships. As with the predictive coder, the auto-encoder (\reffig{fig:3}{(\textbf{a})}) trains to minimize the
mean-squared error between the actual image and the predicted image on $82,630$ samples for $200$
epochs with gradient descent optimization with Nesterov momentum, a weight decay of $5 \times
10^{-6}$, and a learning rate of $10^{-1}$ adjusted by the OneCycle learning rate scheduler. The
auto-encoder has mean-squared error of $0.039$ and a high visual fidelity.

{ The predictive coder encodes a higher resolution and more accurate spatial map
in its latent space than the auto-encoder.} As
with the predictive coder, we train an auxiliary neural network to
predict the agent's position from the auto-encoder's latent units
(\reffig{fig:3}{(\textbf{b})}). The neural network's prediction error
indirectly measures the auto-encoder positional information. For
greater than $80\%$ of the auto-encoder's points, its prediction error
is less than $13.1$ lattice units, as compared to the predictive coder that has $> 80\%$ of its samples below a prediction error of $7.3$ lattice units (\reffig{fig:3}{(\textbf{c})}).

We also show that the predictive coder recovers the environment's spatial distances with finer
resolution compared to the auto-encoder. As with the predictive coder, we calculate the local pairwise
distances in physical space and in the auto-encoder's latent space, and we generate the joint
density between the physical and latent distances (\reffig{fig:3}{(\textbf{d})}). Compared to the predictive
coder's joint density, the auto-encoder's latent distances increase with the agent's physical
distance. The auto-encoder's joint density shows a larger dispersion compared to the predictive coder's
joint density, indicating that the auto-encoder encodes spatial distances with higher uncertainty.

We can quantitatively measure the dispersion in the auto-encoder's joint density by calculating
mutual information of the joint density (\reffig{fig:3}{(\textbf{e})})
\[
  I[X; Z] = \mathbb{E}_{p(X, Z)}\left[ \log \frac{p(X, Z)}{p(X)p(Z)} \right].
\]
The auto-encoder has a mutual information of $0.227$ bits while the predictive coder has a mutual
information of $0.627$ bits. As a comparison, positions with additive Gaussian noise having a standard
deviation $\sigma$ of $2$ lattice units has a mutual information of $0.911$ bits. The predictive
coder encodes $0.400$ additional bits of distance information to the auto-encoder. The predictive
coder's additional distance information of $0.4$ bits exceeds the auto-encoder's distance
information of $0.227$ bits, which indicates the temporal dependencies encoded by the predictive
coder capture more spatial information compared to visual similarity.

{ \section*{Predictive coding network maps visual
degenerate environments whereas auto-encoding cannot}

The sequential prediction task is beneficial for spatial mapping: the predictive coder captures more
accurate spatial information compared to the auto-encoder, and predictive coder's latent distances
have a stronger correspondence to the environment's metric. However, it is unclear whether
predictive coding is \textit{necessary} (as opposed to \textit{beneficial}) to recover an
 environment's map; an auto-encoder may still recover the environment's map. In this section, we
demonstrate that predictive coding is necessary for recovering an environment's map. First, we show
empirically that there exist environments that auto-encoding cannot recover. Second, we provide
insight into why the auto-encoder fails with a theorem showing that auto-encoding cannot recover
many environments---specifically, environments with visually similar yet spatially different
locations.

In the previous sections, the agent explores a natural environment with forest, river, and cave
landmarks. While this environment models exploration in outdoor environments, the lack of controlled
visual scenes complicates interpreting predictive coder and auto-encoder. We introduce a circular
corridor (\reffig{fig:3.1}{(\textbf{a})}) to introduce visual scenes that \textit{visually
identical}---rather than \textit{visually similar}---yet spatially
different. Specifically, the rooms revolve clockwise as red, green,
red, blue, and yellow; there exist two distinct red rooms. The two
distinct red rooms answers two questions: can the the predictive coder
and auto-encoder recover the map for
environments with visual symmetry, and does the predictive coder recover a global map or a relative
map? In other words, does the predictive coder recovers the circular corridor's geometry, or does it
learn a linear hallway?

Similar to previous sections, we train a neural network (or a predictive coder) to perform
predictive coding while traversing the circular corridor. In addition, we train a neural network (or
an auto-encoder) to perform auto-encoding. The auto-encoder fails to recover spatial information in areas with visual
degeneracy: it maps the two distinct red rooms to the same location
(\reffig{fig:3.1}{(\textbf{e})}). In \reffig{fig:3.1}{(\textbf{e})},
the auto-encoder predicts the images in the left red room to locations
in the right red room---whereas locations with distinct visual scenes
(such as the yellow and blue rooms) show a low prediction
error (mean error $\norm{x - \hat{x}}_{\ell_2} = 5.004$ lattice
units). In addition, the auto-encoder's latent distances do not
separate the different red rooms in latent space---whereas the
predictive coder separates the two red rooms (\reffig{fig:3.1}{(\textbf{b})}).
Moreover, the predictive coder demonstrates a low prediction
error throughout, including the two visually degenerate red rooms
(mean error $\norm{x - \hat{x}}_{\ell_2} = 0.071$ lattice units) (\reffig{fig:3.1}{(\textbf{c})}).

Moreover, we measure the
relationship between the predictive coder's (and auto-encoder's) metric and the environment's metric
by fitting a regression model (\reffig{fig:3.1}{(\textbf{b})})
\[ 
  \norm{z - z'} = \alpha \log \norm{x - x'} + \beta 
\] 
between the predictive coder's (and auto-encoder's) latent distances
($\norm{z - z'}$) and the environment's physical distances ($\norm{x - x'}$). Compared to the
natural environment, the auto-encoder's latent distances shows more deviation from the environment's
spatial distances whereas the predictive coder's latent distances maintain a correspondence with
spatial distances. For the predictive coder, the latent metric recovers spatial metric
quantitatively: the correlation plot (\reffig{fig:3.1}{(\textbf{d}, left)}) shows a high correlation
($r = 0.827$) between the latent and spatial distances, and the quantile-quantile plot
(\reffig{fig:3.1}{(\textbf{d}, right)}) shows a high overlap between the regression model and the
observed latent distances ($\kl (p_\text{PC} \Vert p_\text{model}) = 0.250$).  The auto-encoder's
latent metric, conversely, does not recover the spatial metric: the correlation plot
(\reffig{fig:3.1}{(\textbf{f}, left)}) shows a low correlation ($r = 0.288$) between the latent and
spatial distances, and the quantile-quantile plot (\reffig{fig:3.1}{(\textbf{f}, right)}) shows a
low overlap between the regression model and the observed latent distances ($\kl (p_\text{PC} \Vert
p_\text{model}) = 3.806$).

As shown in \reffig{fig:3.1}, the auto-encoder cannot recover the spatial map of the circular
corridor---whereas the predictive coder can recover the map. Here we show that auto-encoders cannot
the environment's map for \textit{any} environment with visual degeneracy, not just the circular
corridor. To show that the auto-encoder cannot learn the environment's map, we show that \textit{any}
statistical estimator cannot learn the environment's map from \textit{stationary} observations.
For clarity and brevity, we will provide a proof sketch on a lattice
environment $X$---a
closed subset of $\mathbb{Z}^2$.
\begin{theorem}\label{th:ae}
    Consider an environment $X$---a closed subset of the lattice
    $\mathbb{Z}^2$ with a function $x \xmapsto{f} I$ that gives an
image $I_x = f(x) \subset \mathbb{R}^D $ for each position $x \in
X$. Let the environment's observations be degenerate such that 
\[f(x_1) = f(x_2) \text{ for some } x_1 \neq x_2.\]
There exists no decoder $I \xmapsto{d} x$ that satisfies
\[
    x = d \circ I_x = d \circ f(x).
\]
\end{theorem}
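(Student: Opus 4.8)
The plan is to argue by contradiction using nothing more than the definition of a function together with the degeneracy hypothesis. Suppose a decoder $d$ exists with $x = d \circ f(x)$ for every $x \in X$. Instantiating this identity at the two distinct points $x_1 \neq x_2$ furnished by the hypothesis gives $x_1 = d(f(x_1))$ and $x_2 = d(f(x_2))$. But $f(x_1) = f(x_2)$ as elements of $\mathbb{R}^D$, so $d$ is being evaluated at the \emph{same} argument on both lines, and since $d$ is a well-defined map $\mathbb{R}^D \to X$ its outputs must agree: $d(f(x_1)) = d(f(x_2))$. Chaining the three equalities yields $x_1 = x_2$, contradicting $x_1 \neq x_2$. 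Hence no such $d$ exists; the underlying fact is simply that no map out of $\mathbb{R}^D$ can invert the non-injective map $f$.

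The only real content, and the point I would emphasize in the write-up, is that the decoder is constrained to be a function of the \emph{single} current image $I_x$ and nothing else — this is precisely the ``stationary observations'' restriction in the statement. I would make explicit that this is exactly what fails to constrain the predictive coder, whose decoding step (term (3) of \refeq{eq:path-int}, composed with the encoding term (1)) receives the entire history $I_0, \dots, I_k$: the conditional law $P(x_0, \theta_0, \dots, x_k, \theta_k \mid I_0, \dots, I_k)$ can place the agent in the correct red room using temporal context even when $P(I \mid x, \theta)$ is degenerate across distinct positions. So the theorem is really a separation statement — single-image estimators provably cannot do what sequence-based predictive coding can — and I would frame the proof to highlight that.

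If one wants the statement for stochastic rather than deterministic observations, I would replace ``$f(x_1) = f(x_2)$'' by ``the conditional laws $P(I\mid x_1)$ and $P(I\mid x_2)$ coincide (or merely overlap in support)'' and replace the deterministic decoder by any measurable estimator $\hat{x}(I)$; the same non-injectivity obstruction forces $\hat{x}$ to return the same value on the overlap regardless of the true position, so its expected error is bounded below by a positive constant depending on $\norm{x_1 - x_2}$ and the overlap mass — which is the version that actually connects to the empirical prediction-error histograms. Since the core argument is immediate, the main obstacle is not analytic but definitional: stating the hypotheses cleanly, in particular pinning down that $d$ sees only $I_x$, and, in the quantitative refinement, choosing the notion of ``degenerate'' so that the lower bound is both correct and tight. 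The lattice structure $X \subset \mathbb{Z}^2$ and the closedness assumption play no role in the deterministic version and enter a quantitative version only through the minimal separation $\norm{x_1 - x_2} \geq 1$.
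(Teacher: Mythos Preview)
Your argument is correct and is essentially identical to the paper's own proof: both proceed by contradiction, instantiate the assumed identity $x = d\circ f(x)$ at the two degenerate points $x_1 \neq x_2$ with $f(x_1)=f(x_2)$, and conclude $x_1 = x_2$. Your additional commentary on the stochastic extension and on why the predictive coder escapes the obstruction goes beyond what the paper writes in its proof, but the core reasoning matches exactly.
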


\begin{proof}
    The proof proceeds as a consequence that a function has no
    left inverse \textit{if and only if} it is not one-to-one.
    
    Suppose there exists a decoder $I \xmapsto{d} x$ that satisfies
\[
    x = d \circ I_x = d \circ f(x).
\]
Consider
\[ f(x_1) = f(x_2) = I \text{ for some } x_1 \neq x_2.\]
Then, 
\begin{align*}
  x_1 = d \circ f(x_1) = d(I) = d \circ f(x_2) = x_2,
\end{align*}
which is a contradiction, as required.
\end{proof}
Because \refthm{th:ae} demonstrates there exists no decoder for a visually
degenerate environment with \textit{stationary} observations, an
auto-encoder \textit{cannot} recover a visually degenerate
environment; the auto-encoder's failure arises because two
locations with the same observation cannot be discriminated. 
\begin{corollary}
    Consider an auto-encoder $g = \text{dec} \circ \text{enc}$ with an
encoder $I \xmapsto{\text{enc}} z$ and decoder $z \xmapsto{\text{dec}}
I$ that compresses images into a latent space $z \in
\mathbb{R}^H$. There exists no decoder $z \xmapsto{h} x$ that
satisfies
\[
    x = h \circ z_x = h \circ \text{enc} \circ f(x).
\]
\end{corollary}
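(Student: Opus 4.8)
The plan is to derive the corollary directly from \refthm{th:ae} by a composition-and-contradiction argument; no new machinery is needed. First I would make explicit that the corollary inherits the hypotheses of \refthm{th:ae}: the environment $X \subset \mathbb{Z}^2$ is closed and the observation map $f$ is degenerate, i.e.\ $f(x_1) = f(x_2)$ for some $x_1 \neq x_2$. The encoder $\text{enc} \colon \mathbb{R}^D \to \mathbb{R}^H$ is an arbitrary (not necessarily injective) function, so the composite $\text{enc} \circ f \colon X \to \mathbb{R}^H$ is well defined and assigns the latent code $z_x = \text{enc}(f(x))$ to each position $x$.

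Next I would argue by contradiction. Suppose a decoder $h \colon \mathbb{R}^H \to X$ exists with $x = h \circ z_x = h \circ \text{enc} \circ f(x)$ for every $x \in X$. Set $d = h \circ \text{enc} \colon \mathbb{R}^D \to X$. Then $d$ is a bona fide map from images to positions, and by construction $d \circ f(x) = h(\text{enc}(f(x))) = x$ for all $x$, so $d$ is exactly the kind of decoder whose nonexistence is asserted in \refthm{th:ae}. This contradicts \refthm{th:ae}, so no such $h$ can exist.

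There is essentially no obstacle here — the content is carried entirely by \refthm{th:ae}, and the corollary is just the observation that factoring the hypothetical decoder through the latent bottleneck does not help: the encoder cannot separate what $f$ has already collapsed. The one point worth stating carefully is that the argument does \emph{not} require $\text{enc} \circ f$ itself to be degenerate (although it is, since $\text{enc}(f(x_1)) = \text{enc}(f(x_2))$); we apply \refthm{th:ae} to $f$ directly and use $h \circ \text{enc}$ merely as the candidate left inverse. Equivalently, a left inverse $h$ of $\text{enc} \circ f$ would yield $h \circ \text{enc}$ as a left inverse of $f$, since $(h \circ \text{enc}) \circ f = h \circ (\text{enc} \circ f) = \mathrm{id}_X$ — but $f$ is not one-to-one and hence admits no left inverse, by the same elementary fact used to prove \refthm{th:ae}.
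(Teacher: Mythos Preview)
Your proposal is correct and follows exactly the same approach as the paper: define $d = h \circ \text{enc}$ and invoke \refthm{th:ae} to conclude that no such $d$ (and hence no such $h$) can exist. Your write-up is simply more detailed than the paper's two-line version, but the logical content is identical.
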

\begin{proof}
    Consider the decoder $d = h \circ \text{enc}: I \to x$. By \refthm{th:ae},
this decoder cannot satisfy
\[
    x = d \circ f(x) = h \circ \text{enc} \circ f(x),
\]
as required.
\end{proof}
}

  \begin{figure*}[p!]
  \centering
  \includegraphics[width=\textwidth]{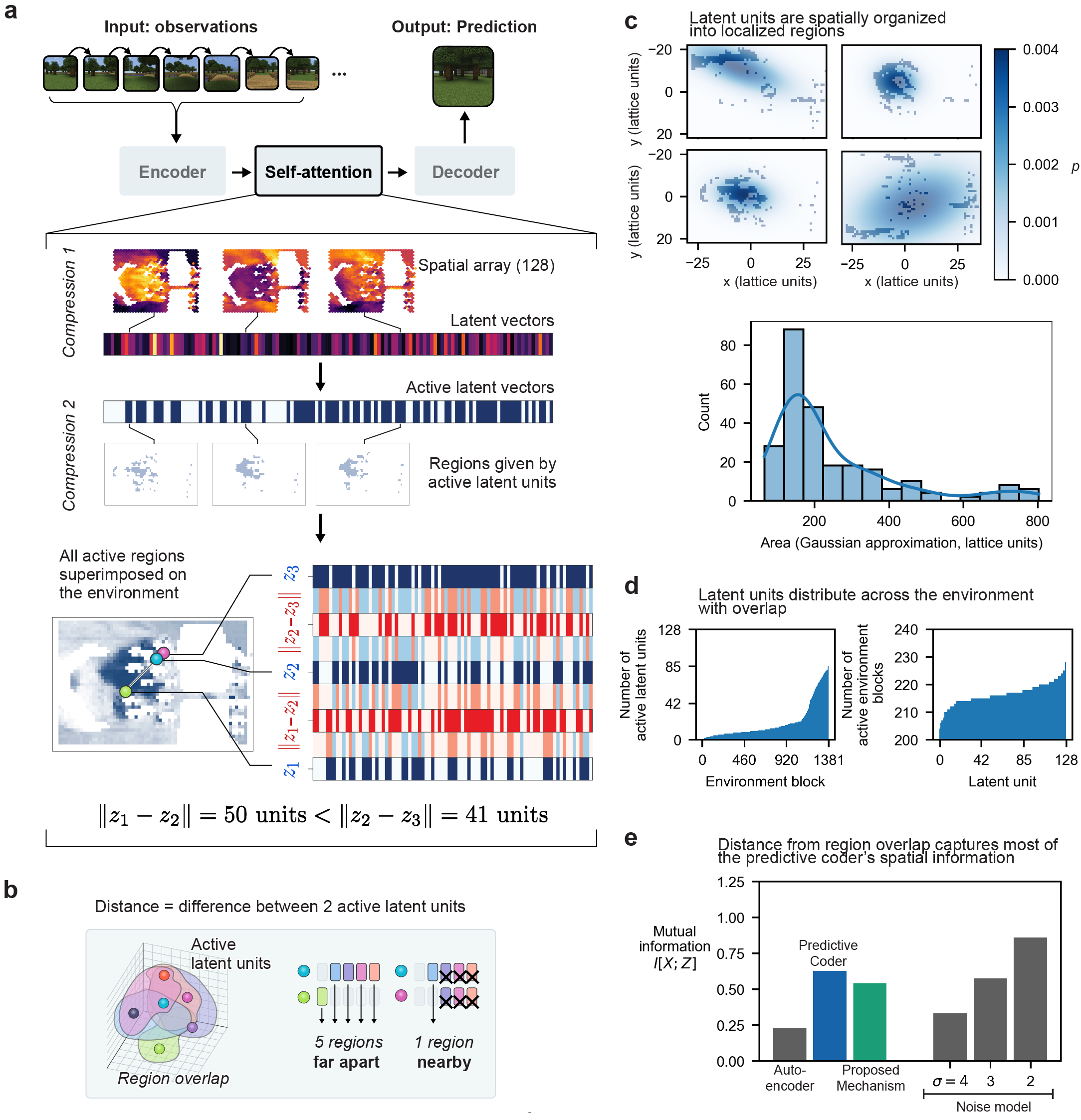}
  \caption{\textbf{The predictive coding  network generates place fields that support vector based distance calculations} \textbf{a,} when encoding past images for predictive coding, the self-attention
    module generates latent vectors. Each continuous unit in these latent vectors activates in
    concentrated, localized regions in physical space. These continuous units can be thresholded to
    generate a binary vector determining whether each unit is active. Each latent unit covers a
    unique region, and each physical location gives a unique combination of these overlapping
    regions. As an agent moves away from its original location, the combination of overlapping
    regions gradually deviates from its original combinations. This deviation, as measured by
    Hamming distance, correlates with physical distance. \textbf{b,} distance is given by the
    difference in the latent units' overlapping regions. Two nearby locations have small deviations
    in overlap (right) while two distant locations have large deviations (middle). \textbf{c}, latent
    units are spatially organized into localized regions. The active latent units are approximated
    by a two-dimensional Gaussian distribution to measure the latent unit's localization (top). The
    latent units' Gaussian approximations are highly localized with a mean area of $254.6$ for
    densities $p \geq 0.0005$. \textbf{d,} latent units distributed across the environment. The
    number of latent units was calculated as each lattice block in the environment (left), and the
    number of lattice blocks were calculated for each active unit (right). The latent units provide
    a unique combination for $87.6\%$ of the environment, and their aggregate covers the entire
    environment. \textbf{e,} distance from the region overlap captures most of the predictive
    coder's spatial information. We calculate the distance for every pair of active latent vectors
    and their respective physical Euclidean distances as a joint distribution. The proposed
    mechanism captures a majority of the predictive coder's spatial information---as the proposed
    mechanism's mutual information (0.542 bits) compares to the predictive coder's mutual
    information (0.627 bits)}\label{fig:4}
\end{figure*}

\section*{Predictive coding generates units with localized receptive fields that support vector navigation}

In the previous section, we demonstrate that the predictive coding neural network captures spatial relationships within an environment containing more internal spatial information than can be captured by an auto-encoder network that encodes image similarity. Here, we analyze the structure of the spatial code learned by the predictive coding network. We demonstrate that each unit in the neural network's latent space activates at
distinct, localized regions---akin to place fields in the mammalian brain---in the environment's physical space
(\reffig{fig:4}{(\textbf{a})}). These place fields overlap and their aggregate
covers the entire physical space. Each physical location, is represented by a unique combination of
overlapping regions encoded by the latent units. This combination of overlapping regions 
recovers the agent's current physical position. Furthermore, given two
physical locations, there now exist two distinct combinations of
overlapping regions in latent space.
{ Vector navigation is
the representation of the vector heading to a goal location from a
current location~\autocite{bushUsingGridCells2015b}. We show that overlapping regions (or place fields) can give a
heading from a current locations to a goal location. Specifically, a
linear decoder recovers the vector to a goal location from a starting
location by taking the difference in place fields, which supports
vector navigation\footnote{Studies such as Bush \textit{et al.} (2015)
consider grid cell-supported vector navigation whereas we only consider vector
navigation using place cells.} (\reffig{fig:supp-A}).}

To support this proposed mechanism, we first demonstrate the neural network generates place
fields. In other words, units from the neural network's latent space produce localized regions in
physical space. To determine whether a latent unit is active, we threshold the continuous value with
its 90th-percentile value. { The agent has head
direction varies to ensure the regions are stable across all head directions.}
To measure a latent unit's localization in physical space, we fit each
latent unit distribution, with respect to physical space, to a two-dimensional Gaussian distribution
(\reffig{fig:4}{(\textbf{c}), top})
\[
  p = \frac{1}{2 \pi |\Sigma|} \exp \left[ -\frac{1}{2} (x - \mu)^{\intercal} \Sigma^{-1} (x - \mu)
  \right]
\]
We measure the area of the ellipsoid given by the Gaussian approximation where $p \geq 0.0005$
(\reffig{fig:4}{(\textbf{c})}, bottom). The area of the latent unit approximation measures how
localized a unit is compared to the environment's area, which measures $40 \times 65 = 2,600$
lattice units. The latent unit approximations have a mean area of $254.6$ lattice units and a $80\%$
of areas are $< 352.6$ lattice units, which cover $9.79\%$ and $13.6\%$ of the environment,
respectively.

The units in the neural network's latent space provide a unique
combinatorial code for each spatial position. The aggregate of latent units covers the environment's entire
physical space. At each lattice block in the environment, we calculate the number of active latent
units (\reffig{fig:4}{(\textbf{d})}, left). The number of active latent units is different in
$87.6\%$ of the lattice blocks. Every lattice block has at least one active latent unit, which
indicates the aggregate of the latent units cover the environment's
physical space. { Moreover, to ensure the regions
remain stable across shifting landmarks, environment's
trees were removed and randomly redistributed in the environment
(\reffig{fig:supp-E}{(\textbf{a-b})}). The regions remain stable after
changing the tree landmarks with the Jaccard index ($|S_\text{new}
\cap S_\text{old} | / | S_\text{new} \cup S_\text{old} |$) (or the
intersection over union between new regions $S_\text{new}$ and old $S_\text{old}$ regions) of 0.828.}

Lastly, we demonstrate that the neural network can measure physical distances and could perform
vector navigation---representing the vector heading from a current
location to a goal location---by comparing the combinations of overlapping regions
in its latent space. { We first
determine the active latent units by thresholding each continuous value by its 90th-percentile
value. At each position, we have a 128-dimensional binary vector that gives the overlap of 128
latent units. We take the bitwise difference $z_1 - z_2$ between the overlapping
codes $z_1$ and $z_2$ at two varying positions $x_1$ and $x_2$ with
the vector displacement $x_1 - x_2$
(\reffig{fig:supp-A}{(\textbf{a})}). We then fit a linear decoder from
the code $z_1 - z_2$ to the vector displacement $x_1 - x_2$,
\[
    x_1 - x_2 = W[z_1 - z_2] + b.
  \]
The predicted distance error $\norm{r - \hat{r}}$ and the predicted
direction error $\norm{\theta - \hat{\theta}}$ are decomposed from the
predicted displacement $\hat{x_1} - \hat{x_2}$. The linear decoder has
a low prediction error for distance ($< 80\%, 12.49$ lattice units;
mean $7.89$ lattice units) and direction ($< 80\%, 48.04\degree$; mean
$30.6\degree$) (\reffig{fig:supp-A}{(\textbf{b-c})}). The code $z_1 - z_2$
is highly correlated with direction $\theta$ and distance $r$ with
Pearson correlation coefficients $0.924$ and $0.718$, respectively
(\reffig{fig:supp-A}{(\textbf{d})}).

We can measure the correspondence between the bitwise
distance\footnote{In previous sections, we compute the Euclidean
distance between latent units. For the bitwise distance, we threshold
the latent units to its 90th-percentile then compute the $L_1$-norm between the units.}
$| z_1 - z_2 |$ and the physical distances $\norm{x_1 -
x_2}_{\ell_2}$. Similar the previous sections, we compute the joint densities of the binary vectors'
bitwise distances and the physical positions' Euclidean distances. We then calculate their mutual
information to measure how much spatial information the bitwise distance captures. The proposed
mechanism for the neural network's distance measurement---the binary vector's Hamming
distance---gives a mutual information of $0.542$ bits, compared to the predictive coder's mutual
information of $0.627$ bits and the auto-encoder's mutual information of $0.227$ bits
(\reffig{fig:4}{(\textbf{e})}). The code from the overlapping regions capture a
majority amount of the predictive coder's spatial information.
}

  \section*{Discussion}

Mapping is a general mechanism for generating an internal
representation of sensory information. While spatial maps facilitate
navigation and planning within an environment, mapping is a ubiquitous
neural function that extends to representations beyond visual-spatial
mapping. The primary sensory cortex (S1), for example, maps tactile
events topographically. Physical touches that occur in proximity are
mapped in proximity for both the neural representations and the
anatomical brain
regions\autocite{rosenthalS1RepresentsMultisensory2023}. Similarly,
the cortex maps natural speech by tiling regions with different words
and their relationships, which shows that topographic maps in the
brain extend to higher-order cognition. Similarly, the cortex maps
natural speech by tiling regions with different words and their
relationships, which shows that topographic maps in the brain extend
to higher-order cognition. The similar representation of non-spatial
and spatial maps in the brain suggests a common mechanism for charting
cognitive \autocite{behrensWhatCognitiveMap2018}.  However, it is
unclear how a single mechanism can generate both spatial and
non-spatial maps.

Here, we show that predictive coding provides a basic, general
mechanism for charting spatial maps by predicting sensory data from
past sensory experiences{---including environments
with degenerate observations}. Our theoretical framework applies to any
vector valued sensory data and could be extended to auditory data,
tactile data, or tokenized representations of language. We demonstrate
a neural network that performs predictive coding can construct an
implicit spatial map of an environment by assembling information from
local paths into a global frame within the neural network’s latent
space. The implicit spatial map depends specifically on the sequential
task of predicting future visual images . Neural networks trained as
auto-encoders do not reconstruct a faithful geometric representation
in the presence of physically distant yet visually similar landmarks.

Moreover, we study the predictive coding neural network’s representation in latent space. Each unit in the network’s latent space activates at distinct, localized regions—called place fields—with respect to physical space. At each physical location, there exists a unique combination of overlapping place fields. At two locations, the differences in the combinations of overlapping place fields provides the distance between the two physical locations. The existence of place fields in both the neural network and the hippocampus\autocite{okeefePlaceUnitsHippocampus1976} suggest that predictive coding is a universal mechanism for mapping. In addition, vector navigation emerges naturally from predictive coding by computing distances from overlapping place field units. Predictive coding may provide a model for understanding how place cells emerge, change, and function. 

Predictive coding can be performed over any sensory modality that has some temporal sequence. As natural speech forms a cognitive map, predictive coding may underlie the geometry of human language. Intriguingly, large language models train on causal word prediction, a form of predictive coding, build internal maps that support generalized reasoning, answer questions, and mimic other forms of higher order 
reasoning\autocite{brownLanguageModelsAre2020}. Similarities in spatial and non-spatial maps in the brain suggest that large language models organize language into a cognitive map and chart concepts geometrically. These results all suggest that predictive coding might provide a unified theory for building representations of information—connecting disparate theories including place cell formation in the hippocampus, somatosensory maps in the cortex, and human language.

  \section*{Acknowledgements}

We deeply appreciate Inna Strazhnik for her exceptional contributions to the scientific visualizations and figure illustrations. Her expertise in translating our research into clear visuals has significantly elevated the clarity and impact of our paper. We express our heartfelt gratitude to Thanos Siapas, Evgueniy Lubenov, Dean Mobbs, and Matthew
Rosenberg for their invaluable and insightful discussions which profoundly enriched our work. Their
expertise and feedback have been instrumental in the development and realization of this
research. Additionally, we appreciate the insights provided by Lixiang Xu, Meng Wang, and Jieyu
Zheng, which played a crucial role in refining various aspects of our study. The dedication and
collaborative spirit of this collective group have truly elevated our research, and for that, we are
deeply thankful.

  \section*{Code Availability}
The code supporting the conclusions of this study is available on GitHub at \href{https://github.com/jgornet/predictive-coding-recovers-maps}{https://github.com/jgornet/predictive-coding-recovers-maps}. The repository contains the Malmo environment code, training scripts for both the predictive coding and autoencoding neural networks, as well as code for the analysis of predictive coding and autoencoding results. Should there be any questions or need for clarifications about the codebase, we encourage readers to raise an issue on the repository or reach out to the corresponding author.

\section*{Data Availability}
All datasets supporting the findings of this study, including the latent variables for the autoencoding and predictive coding neural networks, as well as the training and validation datasets, are available on GitHub at \href{https://github.com/jgornet/predictive-coding-recovers-maps}{https://github.com/jgornet/predictive-coding-recovers-maps}. Researchers and readers interested in accessing the data for replication, verification, or further studies can contact the corresponding author or refer to the supplementary materials section for more details.

\section*{Notes}
\renewcommand{\thefootnote}{\arabic{footnote}}
{\footnotesize \footnotemark[1] Regarding the increasing number of
channels in the middle, we initially ran experiments in the
auto-encoder and the predictive coder with both the bottleneck
architecture (decreasing middle channels) and the fan-out architecture
(increasing middle channels). Both the bottleneck and fan-out
architectures gave similar performance in visual prediction—for both
the predictive coder and auto-encoder. In the predictive coder, we
found that the latent variables in the fan-out architecture generate
place cell-like fields seen in Figure 4. Because the auto-encoder has
similar performance for the bottleneck and fan-out architectures, we
use the fan-out architecture for the auto-encoder to provide a
controlled comparison between auto-encoding and predictive coding.

\footnotemark[2] The appearance of place cell-like firing is common in simpler networks
that perform spatial navigation such as Treves Miglino, Parisi (2007)
and Sprekeler \& Wiskott (2007). It is currently unclear whether
artificial place cell-like behavior corresponds to biological place
cells. Artificial place cell-like behavior could be an
artifact of the simplified inputs or spatial coordinates.

}

  {\footnotesize \printbibliography}
\end{multicols}

\newpage
\section*{Supplementary Information}
\beginsupplement
\label{sec:si}
{%
\subsection*{Neural networks solve predictive coding by performing maximum
likelihood estimation}\label{sec:app-nn}

We can express the model distribution $p_\theta(o_t | o_{<t})$ as
\begin{align*}
  p_\theta(o_t | o_{<t}) &= \int p_\theta(o_t, x_t, x_{<t} | o_{<t}) \; dx_t dx_{<t} \\
                         &= \int p_\theta(o_t | x_t) p_\theta(x_t | x_{<t}) p_\theta(x_{<t} |
                           o_{<t})\; dx_t dx_{<t}\; \\
                         &= \mathbb{E}_{x \sim p_\theta(x_t | o_{<t})} \left[p_\theta(o_t |
                           x_t)\right]
\end{align*}

Performing maximum likelihood estimation,
\begin{align*}
  &\argmax_{\theta} \mathbb{E}_{o \sim p_\text{data}(o)} p_\theta(o_1, \ldots, o_T)  \\
  &\quad= \argmax_{\theta} \sum^T_{t=1} \mathbb{E}_{o \sim p_\text{data}(o)} \left[ \mathbb{E}_{x_t \sim p_\theta(x_t | o_t)}
    p_\theta(o_t | x_t) \right] \\
\end{align*}
As $\log$ is a monotonic, increasing function, we can take perform maximum \textit{log-likelihood}
estimation,
\begin{align*}
  &\argmax_{\theta} \mathbb{E}_{o \sim p_\text{data}(o)} p_\theta(o_1, \ldots, o_T)  \\
  &\quad= \argmax_{\theta} \mathbb{E}_{o \sim p_\text{data}(o)} \log p_\theta(o_1, \ldots, o_T)  \\
  &\quad= \argmax_{\theta} \sum^T_{t=1} \mathbb{E}_{o \sim p_\text{data}(o)} \mathbb{E}_{x_t \sim p_\theta(x_t | o_t)}
    \left[ \log p_\theta(o_t | x_t) \right] \\
\end{align*}
Predictive coding, which is solving for $p_\theta(o_t | x_t)$ and $p_\theta(x_t |
o_{<t})$, is equivalent to estimating the data-generating distribution $p_\theta(o_1, \ldots, o_T)$.

Suppose that the agent's path and observations are deterministic. First, the agent's next position
given its past positions
\[
    x_t = F_\theta(x_{<t}).
\]
Second, the agent's \textit{sequence} of past positions given its \textit{sequence} of observations is given by
\[
    x_{<t} = f_\theta(o_{<t}).
\]
We can then parameterize the model distributions
\[
    p_\theta(x_t | x_{<t}) = \delta_{F_\theta(x_{<t})}(x_t)
\]
and
\[
    p_\theta(x_{<t} | o_{<t}) = \delta_{f_\theta(o_{<t})}(x_{<t})
\] with neural networks $o_{<t}
\xmapsto{f_\theta} x_{<t}$ and $x_{<t} \xmapsto{F_\theta} x_t$. If every positions have a single
observation, we can parameterize the distribution $p_\theta(o_t | x_t)$ as
\[
  p_\theta(o_t | x_t) = \mathcal{N}(o_t; g_\theta(x_t), \omega^2 \mathbf{I}),
\]
then maximum likelihood estimation becomes
\begin{align*}
  &\argmax_{\theta} \mathbb{E}_{o \sim p_\text{data}(o)} p_\theta(o_1, \ldots, o_T)  \\
  &\quad= \argmax_{\theta} \sum^T_{t=1} \mathbb{E}_{o \sim p_\text{data}(o)} \mathbb{E}_{x_t \sim p_\theta(x_t | o_t)} \left[
    \log p_\theta(o_t | x_t) \right] \\
  &\quad= \argmax_{\theta} \sum^T_{t=1} \mathbb{E}_{o \sim p_\text{data}(o)} \left[
    \log p_\theta(o_t | x_t = F_\theta\circ f_\theta(o_{<t})) \right] \\
  &\quad= \argmin_{\theta} \mathbb{E}_{o \sim p_\text{data}(o)} \sum^T_{t=1}
    \norm{o_t - g_\theta \circ F_\theta \circ f_\theta(o_{<t})}_{\ell_2}^2.
\end{align*}

In neural networks, the errors are propagated by gradient descent
\begin{align*}
  \theta \leftarrow \theta - \nabla_\theta \norm{o_t - g_\theta \circ F_\theta \circ f_\theta(o_{<t})}_{\ell_2}^2.
\end{align*}

\begin{figure}[p!]
  \centering
  \includegraphics[width=\textwidth]{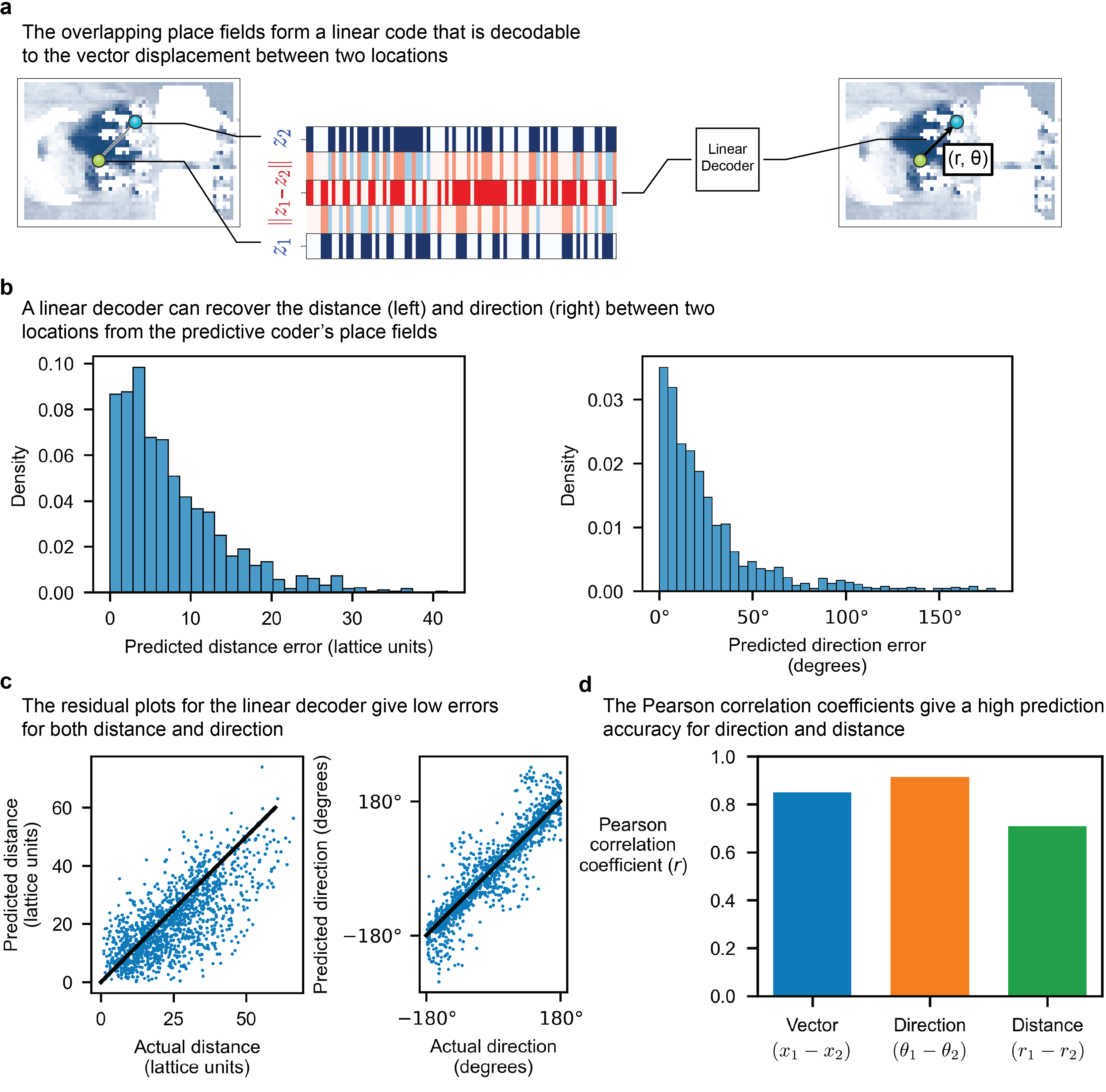}
  \caption{
\textbf{The place field overlap between two locations is linearly decodable to a
vector heading.}  { \textbf{a,} at different two locations $x_1$ and $x_2$, there
exists a place field code $z_1$ and $z_2$, respectively. The bitwise different
$z_1 - z_2$ gives the overlap between place fields at locations $x_1$ and
$x_2$. We perform linear regression inputting the overlap codes $z_1 - z_2$ and
predicting the vector displacement $x_1 - x_2$ between the two locations,
\begin{minipage}{\linewidth}
\[
    x_1 - x_2 = W [z_1 - z_2] + b.
\]
\end{minipage}
The linear model forms a linear decoder from the place field code $z_1 - z_2$ to
the vector displacement $x_1 - x_2$. \textbf{b,} the errors in predicted
distance $\norm{r - \hat{r}}$ (left) and predicted direction $\norm{\theta -
\hat{\theta}}$ (right) are decomposed from the predicted displacement $x_1 -
x_2$. The linear decoder has a low prediction error for distance ($< 80\%$,
12.49 lattice units; mean, $7.89$ lattice units) and direction ($< 80\%$,
48.04$\degree$; mean, $30.6\degree$). \textbf{c,} residual plots show both
distance (left) and direction (right) are strongly correlated with the place
field code. \textbf{d,} in addition, the place field code is strongly correlated
with the vector heading with Pearson correlation coefficients of $0.861$,
$0.718$, and $0.924$ for vector displacement ($x_1 - x_2$), distance ($r_1 - r_2$), and direction ($\theta_1 - \theta_2$), respectively.}
}\label{fig:supp-A}
\end{figure}

\subsection*{The predictive coding neural network requires
self-attention for an accurate environmental map}\label{sec:supp-1}
\begin{topfigure}[t!]
  \centering
  \includegraphics[width=\textwidth]{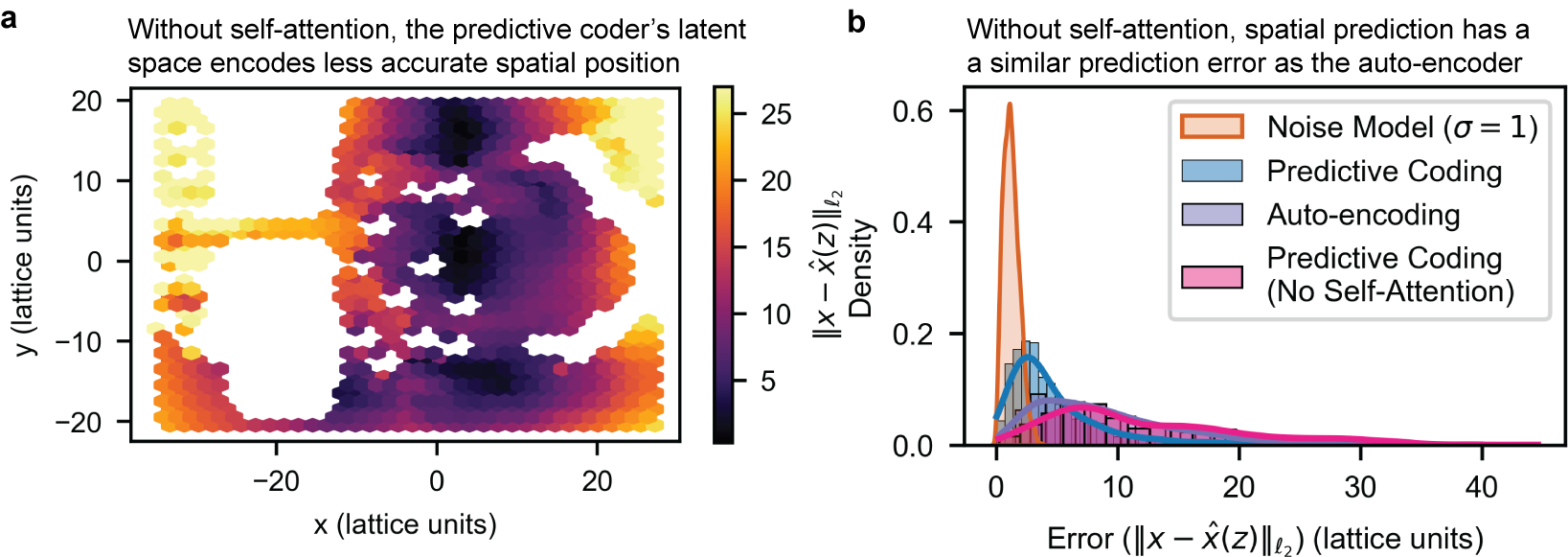}
  \caption{\textbf{Without self-attention, the predictive coder
encodes less accurate spatial information.} { \textbf{a-b,}
self-attention in the predictive coder captures sequential
information. To determine whether the temporal information is crucial
to build an accurate spatial map, a neural network predicts the
spatial location from the predictive coding's
latent space without self-attention. \textbf{a,} a heatmap of the
predictive coder's prediction error shows a low accuracy in many
regions. \textbf{b,} the histogram of prediction errors shows a
similar high prediction error as the auto-encoder.}}\label{fig:supp-1}
\end{topfigure}
The predictive coder architecture has three modules: encoder,
self-attention, and decoder. The encoder and decoder act on single
images---similar to how a tokenizer in language models transforms single
words to vectors. The self-attention operates on image sequences to
capture sequential patterns. If the predictive coder uses temporal
structure to build a spatial map, then the self-attention should build
the spatial map—not the encoder or decoder. In this section, we show
that the temporal structure is required to build an accurate map of
the environment.

Here we validate that self-attention is necessary to build an accurate
map. First, we take the latent units encoded by the predictive coder's
encoder (without the self-attention). We then train a separate neural
network to predict the actual spatial position given the latent
unit. The accuracy of the predicted positions provides a lower bound
on the spatial information given by the predictive coder's encoder’s
latent space. The heatmap (left) visualizes the errors given different
positions in the environment. The histogram of the prediction errors
(right) provides a comparison between the latent spaces of the
predictive coder, the predictive coder without self–attention, and the
auto-encoder. Without self-attention, the predictive coder has a much
higher prediction error of its position similar to the auto-encoder.

\newpage

\subsection*{The continuity and number of past observations determines
the environmental map's accuracy}

\begin{topfigure}[t!]  \centering \includegraphics[width=\textwidth]{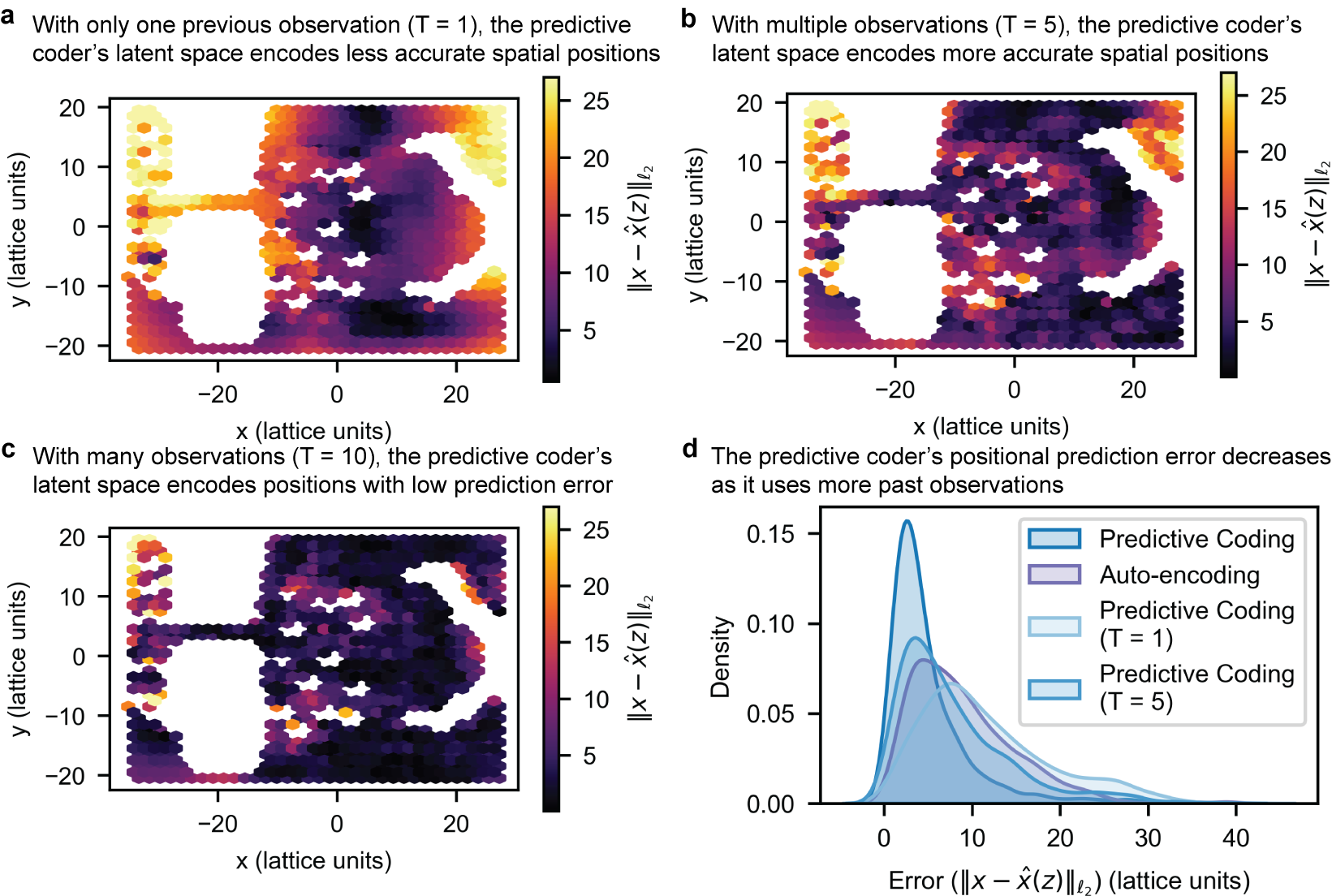} \caption{\textbf{As the number of past observations increase, the predictive coder's
positional prediction error decreases.} { \textbf{a-c,} the predictive coder trains with one, five,
and ten past observations, respectively. To determine how much temporal information is crucial to
build an accurate spatial map, a neural network predicts the spatial location from the predictive
coding's latent space. \textbf{a,} with only one past observation, a heatmap of the predictive
coder's prediction error shows a high error in many regions. \textbf{b,} with five past
observations, the prediction error reduces in many regions. \textbf{c,} with ten past observations,
the prediction error is reduced below 7.3 lattice units for the majority (> 80\%) of
positions. \textbf{d,} as the number of past observations goes to zero, the histogram of prediction
errors converges to the auto-encoder's prediction error.}}\label{fig:supp-2}
\end{topfigure}
In the predictive coder's architecture, it contains three modules: the encoder, the self-attention,
and the decoder. As discussed in the \hyperlink{sec:supp-1}{previous section}, the predictive
 coder's requires self-attention learn an accurate spatial map: the observation's temporal
 information is crucial to build an environment's map. A question that arises is how much temporal
 information does the predictive coder require to build an accurate map? In this section, we show that the predictive coder's spatial prediction error decreases as the continuity and number of past observations.

 First, we take the latent units encoded by the predictive coder's
 encoder trained on differing numbers of past observations
 ($T = 1, 5, 10$).
 We then train a separate neural network to predict the actual spatial
 position given the latent unit. The accuracy of the predicted
 positions provides a lower bound on the spatial information given by
 the predictive coder's encoder’s latent space. The heatmap
 (\reffig{fig:supp-2}{(\textbf{a}, \textbf{b}, \textbf{c})})
 visualizes the errors given different positions in the
 environment. With only one past observation, a heatmap of the
 predictive coder's prediction error shows a high error in many
 regions. With five past observations, the prediction error reduces in
 many regions. With ten past observations, the prediction error is
 reduced below 7.3 lattice units for the majority (> 80\%) of
positions. As the number of past observations goes to zero, the
histogram of prediction errors converges to the auto-encoder's
prediction error (\reffig{fig:supp-2}{(\textbf{d})}).

\begin{figure}[p!]  
  \centering
  \includegraphics[width=\textwidth]{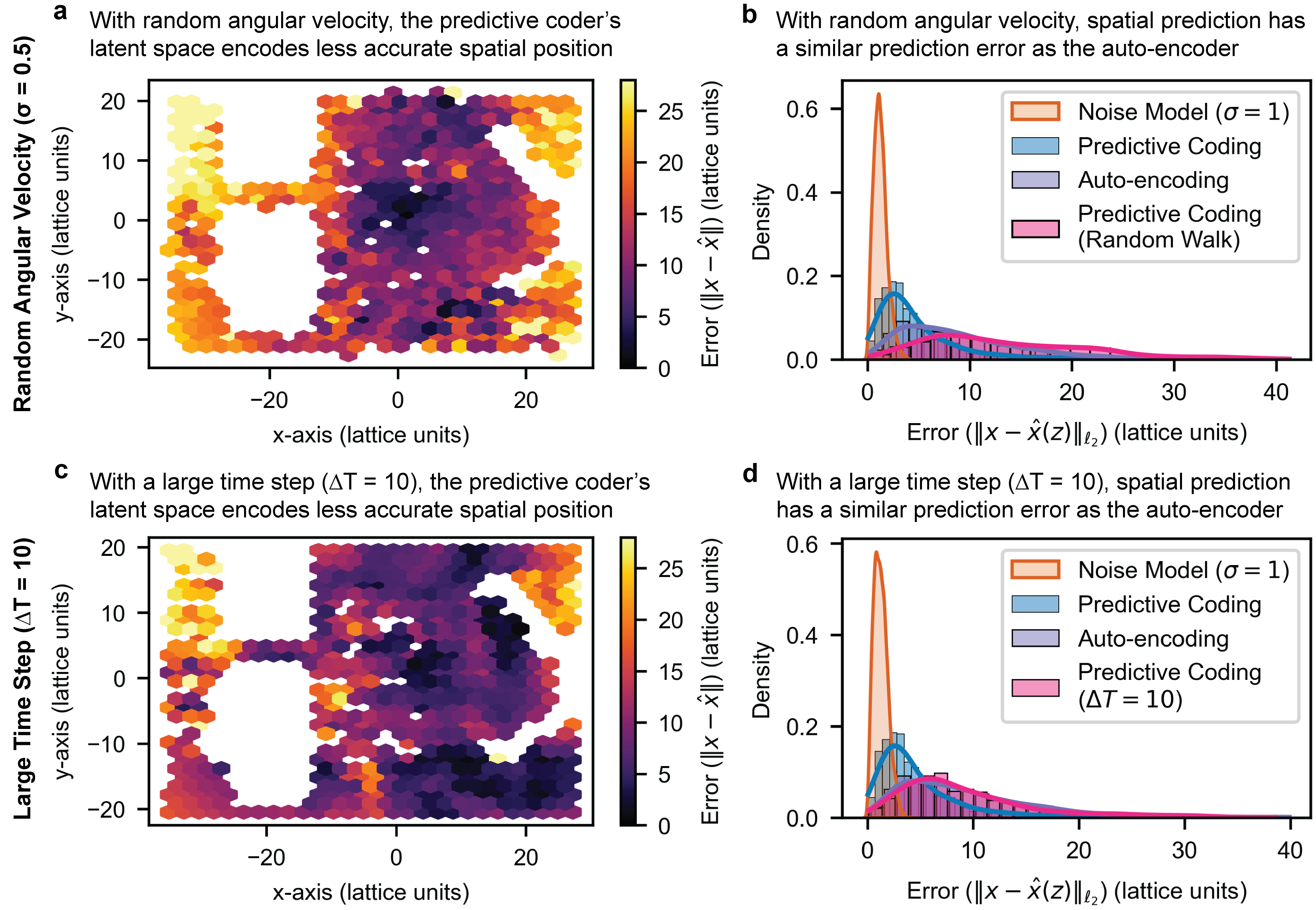} 
\caption{\textbf{As the number of past observations increase, the
predictive coder's positional prediction error decreases (cont.).}
{%
\textbf{a-b,} the agent typically traverses the environment in direct
path with minimal angular rotation. To determine the impact of angular
rotation on the predictive coder, the agent samples a random angular
velocity ($\sigma = \nicefrac{30\degree}{\text{sec}}$) as it traverses
the environment. The positional prediction error (\textbf{a})
increases and the error density \textbf{(b)} shifts to the
auto-encoder's error density. \textbf{c-d,} the agent typically takes
short time steps per an observation (20 images per second). To
determine the impact of the time step length, the agent samples the
environment's images with a lower frame rate (2 images per
second). Similar to the random angular velocity, the large time step
results in the predictive coder having a higher prediction error
\textbf{(c)} and an error density \textbf{(d)} shifting toward the
auto-encoder's error density.}}\label{fig:supp-D}
\end{figure}

\begin{figure}[p!]  
  \centering
  \includegraphics[width=\textwidth]{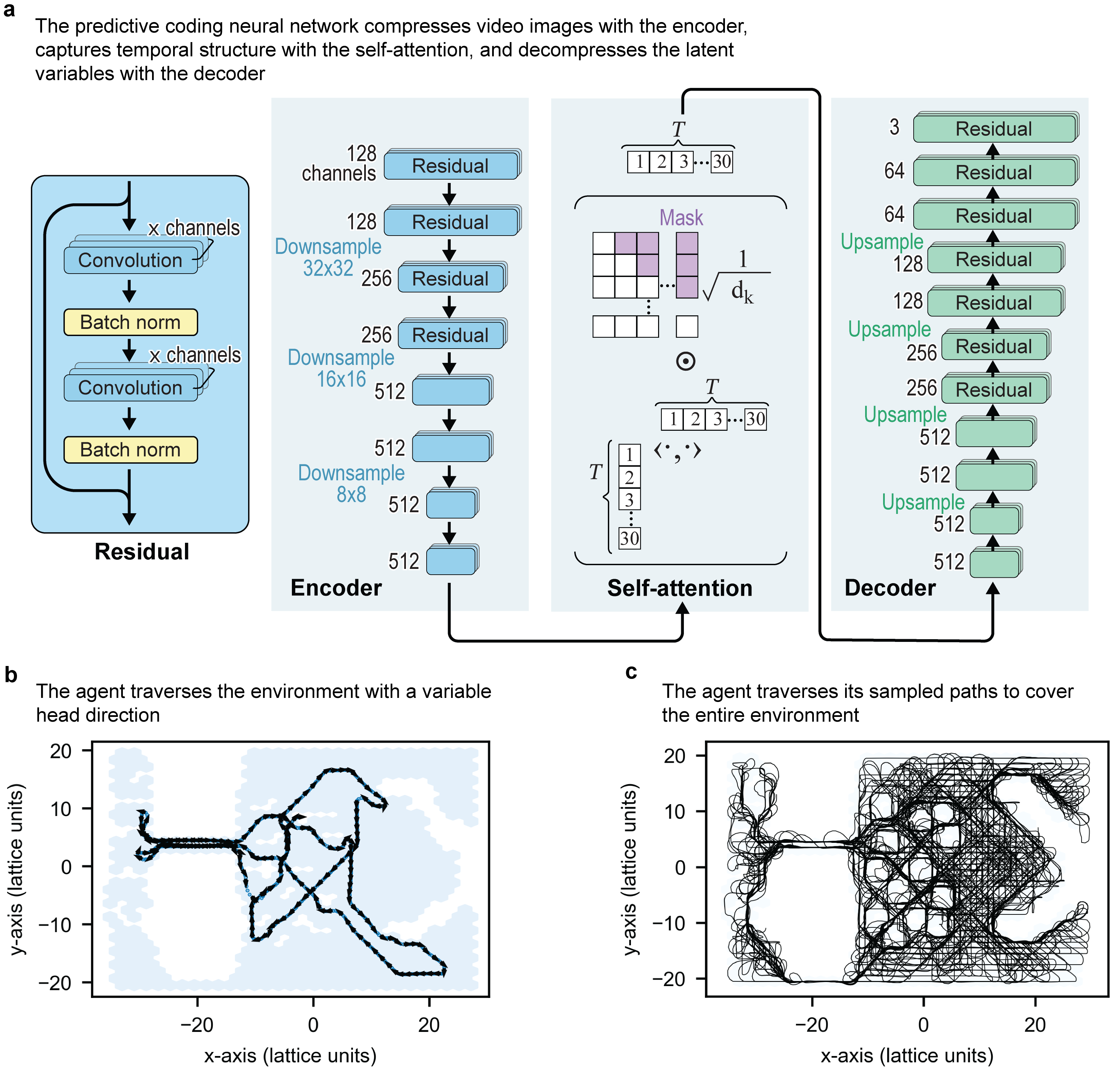} 
\caption{\textbf{Extended neural network architecture and training
description.} { \textbf{a,} the predictive coding neural network, or
predictive coder, uses an encoder, self-attention, and decoder to
perform predictive coding. The encoder is a convolutional neural
network architecture called ResNet-18 that uses residuals to compress
the high-dimensional video image. The self-attention module capture
temporal dependencies from the low-dimensional encoded images. The
self-attention's output gives the predictive coder's latent
variables. The decoder is a convolutional neural network that
upscales---rather than downscales---the predictive coder's latent
variables to predicted images. \textbf{b,} an example path of the
agent shows the agent traversing the environment with a variable head
direction. \textbf{c,} the agent's paths traverse the space to cover the entire
environment.}}\label{fig:supp-E}
\end{figure}

\begin{figure}[p!]  
  \centering
  \includegraphics[width=\textwidth]{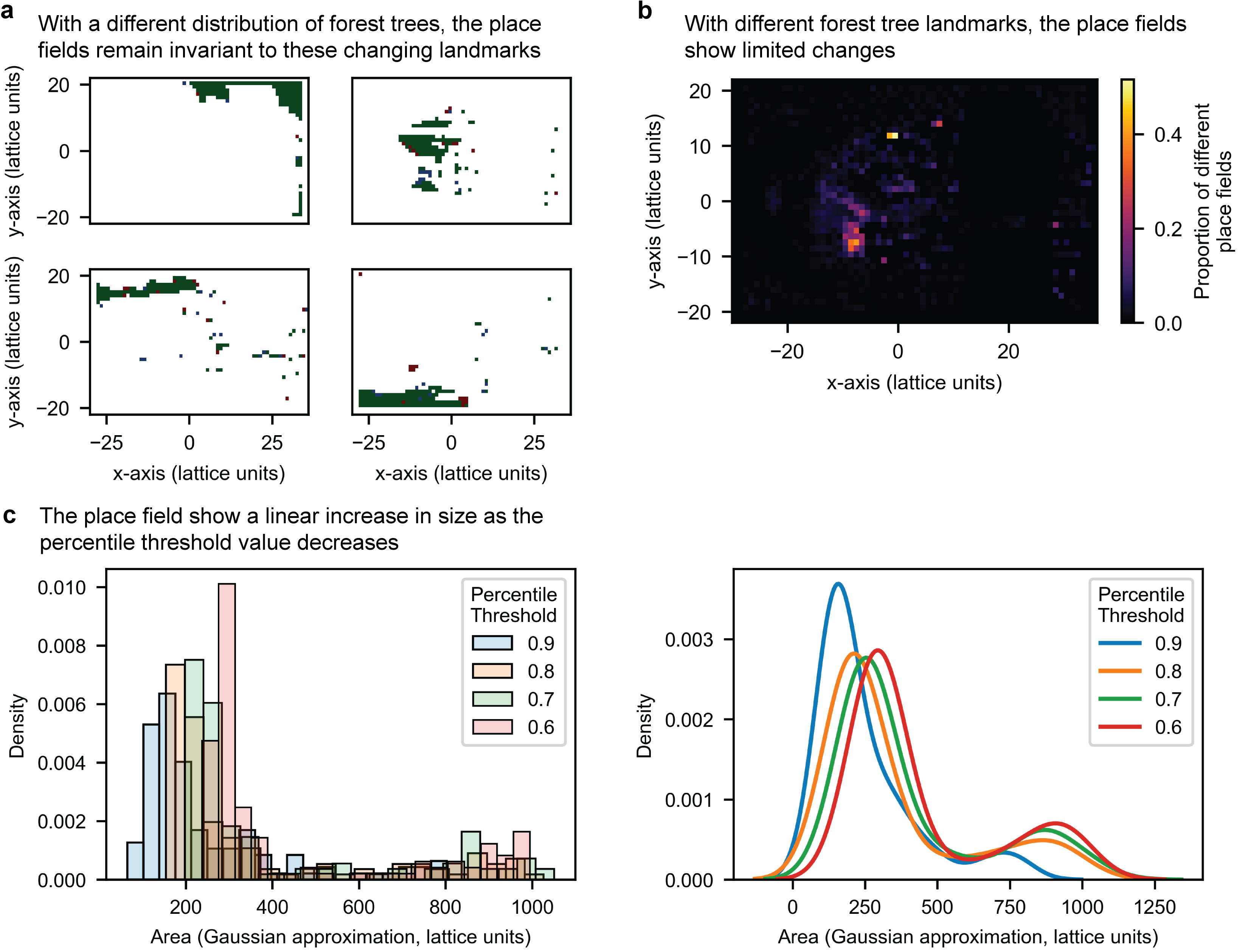} 
\caption{\textbf{Extended place field results.} { \textbf{a-b,} The 
predictive coder's place field latent variables are invariant to
shifting landmarks. To determine the effect of shifting landmarks,
the trees in the environment were removed and randomly redistributed
in the forest region.  \textbf{a,} the predictive coder's original
place fields (\textbf{red}) were overlaid with the new place fields
(\textbf{blue}), and the union set of the original and new place
fields are shown in green. The new, shifted landmark place fields
demonstrate a large overlap (Jaccard index ($| A \cap B |/ | A \cup B
|$) = 0.828) with the original
place fields. \textbf{b,} a visual overlap of the proportion of
different place fields at every location. The place fields show no
variability outside the forest and low variability inside the forest
region. \textbf{c,} the place fields are measured by thresholding the
predictive coder's latent variable, so the place field sizes are dependent on the
percentile threshold value. To determine the effect of the threshold
value on the place field sizes, the place field size histogram (left)
is plotted with respect to the percentile threshold value, and the
place field size densities (right) are estimated using kernel density
estimation from the histogram.}}\label{fig:supp-F}
\end{figure}

}

\end{document}